\begin{document}

\title{ Complex excitations for the derivative nonlinear Schr\"{o}dinger equation
 \thanks{The project is supported by National Natural Science Foundation of China (No.12175069), Global Change Research Program of China (No.2015CB953904), Science and Technology Commission of Shanghai Municipality (No.18dz2271000 and No.21JC1402500).}}


\author{Huijuan Zhou \and Yong Chen \and Xiaoyan Tang \and Yuqi Li}


\institute{Huijuan Zhou \at
              School of Mathematical Sciences, Shanghai Key Laboratory of PMMP, East China Normal University, Shanghai, 200241, People’s Republic of China \\
           Yong Chen (\Envelope corresponding author)\at
               School of Mathematical Sciences, Shanghai Key Laboratory of PMMP, East China Normal University, Shanghai, 200241, People’s Republic of China\\[10pt]
  College of Mathematics and Systems Science, Shandong University of Science and Technology,
Qingdao, 266590, People’s Republic of China\\[10pt]
               Tel.: +86-21-62224199\\
              Fax: +86-21-62235025\\
              \email{profchenyong@163.com}\\
 Xiaoyan Tang \at
               School of Mathematical Sciences, Shanghai Key Laboratory of PMMP, East China Normal University, Shanghai, 200241, People’s Republic of China\\[10pt]\\
 Yuqi Li \at
               School of Mathematical Sciences, Shanghai Key Laboratory of PMMP, East China Normal University, Shanghai, 200241, People’s Republic of China\\[10pt]}
\date{Received: date / Accepted: date}
\maketitle
\begin{abstract}
\ \ \
The Darboux transformation (DT) formulae for the derivative nonlinear Schr\"{o}dinger (DNLS) equation are expressed in concise forms, from which the multi-solitons, $n$-periodic solutions, higher-order hybrid-pattern solitons and some mixed solutions are obtained. These complex excitations can be constructed thanks to more general semi-degenerate DTs. Even the non-degenerate $N$-fold DT with a zero seed can generate complicated $n$-periodic solutions.  It is proved that the solution $q[N]$ at the origin depends only on the summation of the spectral parameters. We find the maximum amplitudes of several classes of the wave solutions are determined by the summation.  Many interesting phenomena are discovered from these new solutions. For instance, the interactions between $n$-periodic waves produce peaks with different amplitudes and sizes; A soliton on a single periodic wave background shares a similar feature as a breather due to the interference of the periodic background. In addition, the results are extended to the reverse-space-time DNLS equation.
\keywords{Darboux transformation $\cdot$ Derivative nonlinear 
Schr\"{o}dinger equation $\cdot$  $n$-periodic solution $\cdot$ Higher-order hybrid-pattern solitons}
\PACS{02.30.Ik \and 02.30.Jr \and 05.45.Yv}
\end{abstract}

\section{Introduction}
\ \ \
The derivative nonlinear Schr\"{o}dinger (DNLS) equation 
\begin{equation}\label{dnls}
 iq_{t}-q_{x x}+i(q^{2}q^{*})_{x}=0,
\end{equation}
where the superscript $*$ denotes complex conjugation, is one of the most important integrable systems in mathematics and physics. It was first derived by various authors as an evolution equation for the change of an Alfv\'{e}n wave propagating along the magnetic field, when the weak nonlinearity and dispersion were taken into account\cite{rogister1971,Mjlhus1974,Mjlhus1976,Mio1976,Ichikawa1977}. The DNLS equation also governs the asymptotic state of the filamentation of lower-hybrid waves, which has moving solitary envelope solutions for the electric field \cite{Spatschek1977}. Ichikawa et al. \cite{iykk-jpsj-1980} revealed the peculiar structure of spiky modulations of amplitude and phase arises from the derivative nonlinear coupling term. In optics, it plays a significant role in the theory of ultrashort femtosecond nonlinear pulse \cite{cxj-pre-2004}. The formation of solitons on the sharp front of optical pulse in an optical fiber according to the DNLS equation was studied \cite{kam-pla-1998}.  The DNLS equation can also describe large-amplitude magnetohydrodynamic waves of plasmas \cite{plasma2}, the sub-picosecond and femtosecond pulses in a single-mode optical fiber \cite{7,8}. 
 
In \cite{Mjlhus1976}, Mj$\phi$lhus gave the exact solitary wave solutions as well as the conditions for modulational stability. In 1978, Kaup and Newell developed the inverse scattering techniques for the DNLS equation and obtained its $N$-soliton solutions \cite{kn-jmp-1978}. In their work, they also proposed the so-called Kaup-Newell (KN) system 
\begin{equation}\label{kns}
\begin{split}
 iq_{t}-q_{x x}-i(q^{2}r)_{x}=0,\\ 
 ir_{t}+r_{x x}-i(qr^{2})_{x}=0, 
\end{split}
\end{equation}
generalizing the DNLS equation. When $r(x,t)=-q^{*}(x,t)$, the KN system is reduced to the DNLS equation.  In 1999, Imai obtained the Darboux transformation (DT) for the DNLS equation and constructed the multi-soliton solutions, which include some new solutions, such as quasi-periodic solutions and soliton solutions under the quasi-periodic backgrounds \cite{kj-jpsj-1999}. In 2011, the rogue wave and rational traveling solution were obtained in \cite{JPAXSW-2011}. In 2012, the generated DT was established and thus the higher-order soliton solutions and rogue waves were obtained \cite{gbl-sam-2012}. The interactions between solitary waves have been widely studied in nonlinear science. In fact, perfect soliton interactions only happen in the lab, because in the complicated natural environment, the interaction process of solitons is often affected by other waves like the periodic waves. 
The periodic solutions are also important in theoretical researches. 
In general, it is extremely nontrivial to construct periodic solutions. The periodic and double-periodic solution expressed by complicated Jacobi elliptic functions have been constructed by many methods \cite{akhmediev-1987-tmp,hxr-2012-pre,cjb2018non}. Recently, the periodic solutions and double periodic solutions expressed by some trig functions were obtained by using DT with a plane wave seed \cite{lwhjs2018,zhj-nd-2021}. There are also some scattered periodic solution in various literatures in different disciplines \cite{txy-pre-2002}.  However, the study of $n$-periodic solutions is rarely considered. It is believed that the study of solitons on an $n$-periodic background can better meet the real situations.

This paper focuses on the multi-solitons (including velocity resonance solitons), $n$-periodic solutions, higher-order solitons, higher-order hybrid-pattern solitons and their mixed solutions for the DNLS equation. Here, we use the DT formulae to obtain these solutions. Most of these mixed solutions  were able to obtain because we constructed a more general semi-degenerate DTs in this work.
The paper is organized as follows. In Section 2, the $N$-fold DT formula with a concise expression without distinguishing the odd or even of $N$ is proposed, from which the $n$-periodic solutions, $n$-solitons and the mixed solutions are constructed. In Section 3, using the degenerate and semi-degenerate DT, the higher-order solitons and the mixed solutions of higher-order solitons and $n$-periodic waves are obtained. In Section 4, via the generalized degenerate and semi-degenerate DT, the higher-order hybrid-pattern solitons and the mixed solutions of higher-order hybrid-pattern solutions and $n$-periodic waves are constructed. In Section 5, the results are extended to the reverse-space-time DNLS equation and the modulational instability (MI) analysis is presented. The final section is devoted to conclusion and discussion.

{\bf \section{$n$-soliton, $n$-periodic solution and multi-soliton on the $n$-periodic background} }

A unique advantage of DT in solving integrable equations is to construct their solutions just by a pure algebraic procedure. In this section, the $N$-fold DT formula for the KN system is given in a concise form, which can be easily reduced to the DT formula for the DNLS equation. Then the multi-soliton, $n$-periodic solution and multi-soliton on the $n$-periodic background for the DNLS equation are obtained.

{\bf \subsection{$N$-fold DT for the KN system}}
The Lax pair of the KN system is known as follows \cite{kn-jmp-1978}
\begin{equation}
\begin{array}{c}\label{xlax}
\Psi_{x}=\left(i\sigma \lambda^{2}+Q \lambda\right) \Psi=U \Psi,\\
\end{array}
\end{equation}
\begin{equation}
\begin{array}{c}\label{tlax}
\Psi_{t}=\left(2 i\sigma \lambda^{4}+V_{3} \lambda^{3}+V_{2} \lambda^{2}+V_{1} \lambda\right) \Psi=V \Psi,
\end{array}
\end{equation}
with
$$
\Psi=\left(\begin{array}{c}
\phi \\
\varphi
\end{array}\right), \quad \sigma=\left(\begin{array}{cc}
1 &  0 \\
0 & -1
\end{array}\right), \quad Q=\left(\begin{array}{cc}
0 & \ q(x,t) \\
r(x,t) & \ 0
\end{array}\right),
$$
$$
V_{3}=2Q, \quad V_{2}=iQ^{2}, \quad V_{1}=Q^{3}+iQ_{x}\sigma,
$$
where the spectral parameter $\lambda$ is an arbitrary complex number.

Consider the gauge transformation
\begin{equation}\label{gt}
\Psi[N]=T \Psi,
\end{equation}
\begin{equation}\notag
T=\sum_{\ell=0}^{[\frac{N}{2}]}
   \left(\begin{array}{ll}
a_{N-2\ell}\lambda^{N-2\ell}\quad &  b_{N-2\ell-1}\lambda^{N-2\ell-1} \\
 c_{N-2\ell-1}\lambda^{N-2\ell-1}\quad & d_{N-2\ell}\lambda^{N-2\ell}
\end{array}\right),
\end{equation}
where $[\frac{N}{2}]$ is a least integer function, $a_{0}$, $b_{0}$, $c_{0}$ and $d_{0}$ are constants. When the subscripts of $a_{N-2\ell}$, $b_{N-2\ell-1}$, $c_{N-2\ell-1}$ and $d_{N-2\ell}$ are less than zero, the corresponding elements are zero, and the other elements of the matrix $T$ are functions of $x$ and $t$. Accordingly, the spectral problem \eqref{xlax} and \eqref{tlax} are transformed to
\begin{equation}\label{nlax}
\begin{array}{l}
\Psi[N]_{x}=U[N] \Psi[N], \quad U[N]=\left.U\right|_{q(x, t) \rightarrow q[N](x, t), r(x,t) \rightarrow r[N](x,t)}, \\
\Psi[N]_{t}=V[N] \Psi[N],  \quad V[N]=\left.V\right|_{q(x, t) \rightarrow q[N](x, t), r(x,t) \rightarrow r[N](x,t)},
\end{array}
\end{equation}
where ($q[N]$, $r[N]$) is the new solution of Eq. \eqref{kns}.

Based on Eqs. \eqref{gt} and \eqref{nlax}, the following results are obtained
\begin{equation}\label{3.3}
T_{x}=U[N]T-TU,
\end{equation}
\begin{equation}\label{3.4}
T_{t}=V[N]T-TV.
\end{equation}
Further, considering the kernel problem of the DT matrix $T$, i.e.,
\begin{equation}\label{hwt}
\left.T\right|_{\lambda=\lambda_{j}} \Psi_{j}=\sum_{\ell=0}^{[\frac{N}{2}]}
   \left(\begin{array}{ll}
a_{N-2\ell}\lambda_{j}^{N-2\ell}\quad &  b_{N-2\ell-1}\lambda_{j}^{N-2\ell-1} \\
 c_{N-2\ell-1}\lambda_{j}^{N-2\ell-1}\quad & d_{N-2\ell}\lambda_{j}^{N-2\ell}
\end{array}\right) \Psi_{j}=0, j=1,2, \cdots, N,
\end{equation}
where $\Psi_{j}=\left(\begin{array}{c}\phi_{j} \\ \varphi_{j}\end{array}\right)=\left(\begin{array}{c}
\phi_{j}\left(x, t, \lambda_{j}\right) \\
 \varphi_{j}\left(x, t, \lambda_{j}\right)
\end{array}\right)$ is the eigenfunction corresponding to $\lambda_{j}$.
 Then the concrete expression of the new solution $(q[N], r[N])$  can be seen in the following theorem by using the Cramer's rule and Eqs. \eqref{3.3}-\eqref{3.4}.

\begin{theorem}
{\bf ($N$-fold  Darboux transformation formula for the KN system):}
The solution $(q[N], r[N])$ for Eq. \eqref{kns} is given as
\begin{equation}\label{dtf}
q[N]=\frac{|M^{2}|}{|P^{2}|}q+2i\frac{|MH|}{|P^{2}|},
r[N]=\frac{|P^{2}|}{|M^{2}|}q+2i\frac{|PW|}{|M^{2}|},
\end{equation}
where $M=(M_{jk})_{1\leq j,k\leq N}$, $H=(H_{jk})_{1\leq j,k\leq N}$,
 $P=(P_{jk})_{1\leq j,k\leq N}$ and  $W=(W_{jk})_{1\leq j,k\leq N}$, with

\begin{equation}
\begin{split}
&M_{jk}=\left\{
         \begin{array}{ll}
           \lambda_{j}^{N-k}\varphi_{j}, & k=odd \\
          \lambda_{j}^{N-k}\phi_{j}, & k=even
         \end{array}
       \right.,\\
&H_{jk}=\left\{
         \begin{array}{ll}
           \lambda_{j}^{N}\phi_{j}, & k=1 \\
           \lambda_{j}^{N-k}\varphi_{j}, & \hbox{$k=even \geq 2$} \\
           \lambda_{j}^{N-k}\phi_{j}, & k=odd
         \end{array}
       \right.,\\
&P_{jk}=\left\{
         \begin{array}{ll}
           \lambda_{j}^{N-k}\phi_{j}, & k=odd \\
          \lambda_{j}^{N-k}\varphi_{j}, & k=even
         \end{array}
       \right.,\\
       &W_{jk}=\left\{
         \begin{array}{ll}
           \lambda_{j}^{N}\varphi_{j}, & k=1\\
           \lambda_{j}^{N-k}\phi_{j}, & \hbox{$k=even \geq 2$}\\
           \lambda_{j}^{N-k}\varphi_{j}, & k=odd
         \end{array}
       \right..\\
\end{split}
\end{equation}
\end{theorem}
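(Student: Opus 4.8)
The plan is to read the new potentials off the top-order coefficients of the intertwining relation and then determine those coefficients from the kernel conditions by Cramer's rule. First I would substitute the $\lambda$-expansion of $T$ into the $x$-part relation \eqref{3.3}, $T_x=U[N]T-TU$ with $U=i\sigma\lambda^{2}+Q\lambda$, and compare coefficients of the highest powers of $\lambda$. Writing $T=\sum_{p}\tau_{p}\lambda^{p}$, the leading matrix $\tau_{N}$ is diagonal with entries $(a_{N},d_{N})$ and $\tau_{N-1}$ is off-diagonal with entries $(b_{N-1},c_{N-1})$. Since $\sigma$ commutes with diagonal matrices, the $\lambda^{N+2}$ term is automatic, and the $\lambda^{N+1}$ term gives the two scalar relations
\[
q[N]=\frac{a_{N}q-2i\,b_{N-1}}{d_{N}},\qquad r[N]=\frac{d_{N}r+2i\,c_{N-1}}{a_{N}}.
\]
This reduces the theorem to computing the four leading coefficients $a_{N},b_{N-1},c_{N-1},d_{N}$ of $T$.

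Next I would exploit the kernel conditions \eqref{hwt}. Read row by row, the first component $A(\lambda_{j})\phi_{j}+B(\lambda_{j})\varphi_{j}=0$ involves only the $a$'s and $b$'s, while the second component $C(\lambda_{j})\phi_{j}+D(\lambda_{j})\varphi_{j}=0$ involves only the $c$'s and $d$'s, where $A,B,C,D$ are the polynomial entries of $T$. Thus the $2N$ kernel equations decouple into two independent $N\times N$ linear systems once the normalization constants ($a_{0}=d_{0}$ and $b_{0}=c_{0}$) are fixed and their contributions moved to the right-hand side. Each system is then solved by Cramer's rule for exactly the coefficients appearing in the formulas above.

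The decisive step is to identify the resulting determinants with $M,H,P,W$. After factoring one power of $\lambda_{j}$ out of each row, the coefficient determinant of the first (respectively second) system reduces to $|P|$ (respectively $|M|$); replacing the relevant column by the inhomogeneous right-hand side turns the numerators for $a_{N}$ and $d_{N}$ into $|M|$ and $|P|$, and those for $b_{N-1}$ and $c_{N-1}$ into $|H|$ and $|W|$. Substituting back, the common factor $\prod_{j}\lambda_{j}$ and the normalization constants cancel in the ratios: the factor $|M|/|P|$ appears once through $a_{N}$ and once through $1/d_{N}$, which is precisely why the squared combinations $|M^{2}|/|P^{2}|=(|M|/|P|)^{2}$ and $|MH|/|P^{2}|$ arise; the formula for $r[N]$ then follows by the symmetry $\phi\leftrightarrow\varphi$, $M\leftrightarrow P$, $H\to W$. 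Finally I would remark that the $t$-part \eqref{3.4}, $T_{t}=V[N]T-TV$, yields no new information: by the zero-curvature compatibility of \eqref{xlax}--\eqref{tlax} it is satisfied by the same $(q[N],r[N])$, so checking a small case (say $N=1$) suffices as a consistency test.

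I expect the main obstacle to be carrying out this determinant bookkeeping uniformly in $N$ without separating the even and odd cases. The position of the fixed constant flips with the parity of $N$: it sits on the diagonal (so it contributes $\phi_{j}$ to the right-hand side) for even $N$, but off the diagonal (contributing $\varphi_{j}$) for odd $N$, and each column carries a power of $\lambda$ shifted by one relative to the entries of $M,H,P,W$. Keeping the column order, the $\phi/\varphi$ alternation, and the extracted $\prod_{j}\lambda_{j}$ factors consistent across both parities, so that the single concise expression in \eqref{dtf} is reproduced, is exactly where the care is required.
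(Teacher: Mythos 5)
Your architecture is exactly the one the paper intends --- its entire ``proof'' of this theorem is the remark that the formula follows from the kernel condition \eqref{hwt} and Eqs.~\eqref{3.3}--\eqref{3.4} by Cramer's rule --- and most of your steps do check out: the $\lambda^{N+2}$ coefficient of \eqref{3.3} vanishes because the leading coefficient of $T$ is diagonal, the $\lambda^{N+1}$ coefficient gives $q[N]=(a_{N}q-2i\,b_{N-1})/d_{N}$ and $r[N]=(d_{N}r+2i\,c_{N-1})/a_{N}$, the $2N$ kernel equations decouple into two $N\times N$ systems with the constants moved to the right-hand side, the two coefficient determinants are $\bigl(\prod_{j}\lambda_{j}\bigr)|P|$ and $\pm\bigl(\prod_{j}\lambda_{j}\bigr)|M|$, and the Cramer numerators for $a_{N}$ and $d_{N}$ are $\pm a_{0}|M|$ and $\pm d_{0}|P|$, so that $a_{N}/d_{N}=(a_{0}/d_{0})|M|^{2}/|P|^{2}$, which is the first term of \eqref{dtf}.

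The gap sits precisely in the step you flagged as delicate: the claim that the numerators for $b_{N-1}$ and $c_{N-1}$ come out as $|H|$ and $|W|$ with $H$, $W$ as defined in the statement. They do not. The column being replaced is a $\varphi$-column, while the replacement (the right-hand side) is a $\phi$-column, so the numerator for $b_{N-1}$ equals $\pm a_{0}$ times the determinant of a matrix that agrees with $H$ in column $1$ but has $\phi_{j}$ and $\varphi_{j}$ interchanged in every column $k\geq 2$ (and the same swap occurs for $W$). Concretely, for $N=2$ the first system is $a_{2}\lambda_{j}^{2}\phi_{j}+b_{1}\lambda_{j}\varphi_{j}=-a_{0}\phi_{j}$, $j=1,2$, whence
\begin{equation*}
b_{1}=\frac{-a_{0}\,(\lambda_{1}^{2}-\lambda_{2}^{2})\,\phi_{1}\phi_{2}}{\lambda_{1}\lambda_{2}\,|P|},\qquad
d_{2}=\frac{d_{0}\,|P|}{\lambda_{1}\lambda_{2}\,|M|},\qquad
-\frac{2i\,b_{1}}{d_{2}}=2i\,\frac{a_{0}}{d_{0}}\,\frac{(\lambda_{1}^{2}-\lambda_{2}^{2})\,\phi_{1}\phi_{2}\,|M|}{|P|^{2}},
\end{equation*}
whereas the stated $H$ gives $|H|=\lambda_{1}^{2}\phi_{1}\varphi_{2}-\lambda_{2}^{2}\phi_{2}\varphi_{1}$, which is not proportional to $(\lambda_{1}^{2}-\lambda_{2}^{2})\phi_{1}\phi_{2}$ for generic eigenfunctions. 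The corrected determinant is the one consistent with the paper's own explicit solutions: it supplies the overall phase factor $\phi_{1}\phi_{2}=e^{i(\theta_{1}+\theta_{2})}$ that appears as $e^{iH}$ in \eqref{gzj} and $e^{i(K_{1}+K_{2})}$ in \eqref{zqj}, whereas $2i|M||H|/|P|^{2}$ with the stated $H$ depends (in those reductions) only on the difference $\theta_{1}-\theta_{2}$ and carries no such phase, hence cannot be the DT output. So your proposal, as written, asserts an identification that fails against the theorem's literal matrices; carrying the bookkeeping through --- which is where the real content of the proof lies --- shows that the theorem holds only with $\phi$ and $\varphi$ interchanged in columns $k\geq2$ of $H$ and $W$, and your proof needs to either establish that corrected statement or explicitly record the discrepancy.
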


\begin{remark}
Here, we present the $N$-fold DT formula with a more concise expression without distinguishing the parity of $N$. Owing to the reduction condition of the DT for the KN system, i.e., $r(x,t)=-q^{*}(x,t)$, the DT of the DNLS equation is established and the eigenfunctions of the DNLS equation satisfy the symmetry relation \cite{kj-jpsj-1999} $\phi^{*}(x,t,-\lambda_k^*)=\varphi(x,t,\lambda_j)$ only for $\lambda_{k}=-\lambda_{j}^{*}$. 
\end{remark}

In this study, we only consider the zero seed solution, namely, $q(x,t)=r(x,t)=0$. In this case, the eigenfunction of the Lax pair \eqref{xlax} and \eqref{tlax} is solved as
   \begin{equation}\label{sef}
\Psi_{j}=\left(\begin{array}{c}\phi_{j} \\ \varphi_{j}\end{array}\right)=\left(\begin{array}{c}e^{i(\lambda_{j}^{2} x+2 \lambda_{j}^{4} t)} \\ e^{-i(\lambda_{j}^{2}x+2 \lambda_{j}^{4}t)}\end{array}\right).
\end{equation}

The $n$-soliton solution, $n$-periodic solution and mixed solution of the  DNLS equation can be obtained  by the $N$-fold DT with the symmetry relation $\phi^{*}(x,t,-\lambda_k^*)=\varphi(x,t,\lambda_j)$ for $\lambda_{k}=-\lambda_{j}^{*}$. And the above solution can only be found if $N$ is even.

{\bf \subsection{$n$-soliton solution}}

The $n$-soliton solution can be constructed by substituting $\lambda_{2k}=-\lambda_{2k-1}^{*}=-\alpha_{2k-1}+i \beta_{2k-1}$, $k=1,..n,$ $n=\frac{N}{2}$ and the eigenfunction \eqref{sef} into the $N$-fold DT.
   
When $N=2$, the exact expression of the one-soliton solution $q_{s}[2]$ is 
  \begin{equation}
 \label{gzj}
q_{s}[2]=-4\alpha_{1}e^{iH}\frac{2\alpha_{1}\cosh{(F-iH)}-2\beta_{1}i\sinh{(F-iH)}}{(-2\beta_{1}i\sinh{(F-iH)}-m)^{2}},
\end{equation}
where
\begin{equation}
\begin{split}
&H=2(\alpha_{1}^{2}-\beta_{1}^{2})x+4(\beta_{1}^{4}-6\alpha_{1}^{2}\beta_{1}^{2}+\alpha_{1}^{4})t),\\
&F=16\alpha_{1}^{3}\beta_{1}t-16\alpha_{1}\beta_{1}^{3}t+4\alpha_{1}\beta_{1}x.
 \end{split}
\end{equation}
The center trajectory of  $q_{s}[2]$ is determined as $x=4(\beta_{1}^{2}-\alpha_{1}^{2})t$ and the amplitude of $q[2]$ is $4|\beta_{1}|$.  When $\alpha_{1} \rightarrow 0$, $q_{s}[2]$ becomes a rational soliton 
\begin{equation}
q[2]_{rs}=\frac{e^{2i\beta_{1}^{2}(2\beta_{1}^{2}t-x)}(64it\beta_{1}^{5}-16ix\beta_{1}^{3}-4\beta_{1})}
{1-256\beta_{1}^{8}t^{2}+128\beta_{1}^{6}tx+(32it-16x^{2})\beta_{1}^{4}-8i\beta_{1}^{2}x},
\end{equation}
where $\beta_{1}$ is an arbitrary real constant.
It is easy to find $|q[2]_{rs}|^{2}=\frac{16\beta_{1}^{2}}{(16\beta_{1}^{4}t-4\beta_{1}^{2}x)^{2}+1}$, which hints $q[2]_{rs}$ is an analytical solution at the whole $(x, t)$ plane and its trajectory is defined explicitly by $x=4\beta_{1}^{2}t$.

Since the exact expression of the multi-soliton solution is too complicated, we only give the dynamic evolutions as an illustration. We show the two-soliton, three-soliton and four-soliton solutions in Fig. \ref{2gzzt}. The velocity resonance $n$-solitons can be derived by $\beta_{2k-1}^{2}-\alpha_{2k-1}^{2}=v_{0}$, ($v_{0}$ is a constant), which are plotted in Fig. \ref{q2gzsdgzt}. The mixed solutions of the elastic collision solitons and velocity resonance solitons are shown in Fig. \ref{q83zt}.  According to the center trajectory equation of the soliton solution, the propagation direction and velocity of the soliton depend on $\alpha_{2k-1}$ and $\beta_{2k-1}$. The amplitude of soliton is a very important physical quantity, and hence more attention is paid to the amplitude of the $n$-soliton solution.

Taking a seed solution $q=0$ in \eqref{dtf}, the solution $q[N]$ at the origin $(0,0)$ reads 
$$q[N](0,0)=2i\frac{|M(0,0)H(0,0)|}{|P(0,0)^{2}|}=2i\frac{|H(0,0)|}{|P(0,0)|}
=-2i\sum_{j=1}^{N}\lambda_{j}.$$
In order to derive the amplitude of the $n$-soliton solution, taking $\lambda_{2k}=-\lambda_{2k-1}^{*}=-\alpha_{2k-1}+i \beta_{2k-1}$, $k=1,..n,$ $n=\frac{N}{2}$, then the soliton solution at the origin is $q_{s}[N](0,0)=4\sum_{k=1}^{n}\beta_{2k-1}$. Since the center trajectory of the $n$-soliton solution always passes through the origin, the amplitude of the $n$-soliton solution $q_{s}[N]$ is rightly $4|\sum_{k=1}^{n}\beta_{2k-1}|$ for $n=\frac{N}{2}$. The cross section diagrams of the solutions mentioned above are exhibited in Fig. \ref{gzjzft1}, which clearly demonstrate that the amplitude of the solutions is given by $4|\sum_{k=1}^{n}\beta_{2k-1}|$.
\begin{figure}[ht!]
\centering
\subfigure[]{
\label{2gz}
\begin{minipage}[b]{0.3\textwidth}
\includegraphics[width=3.0cm]{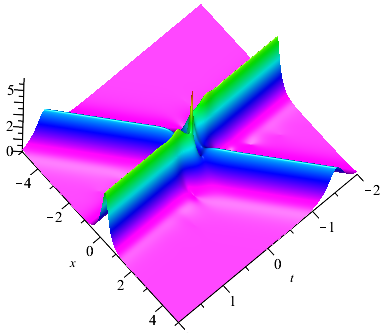}\\
\includegraphics[width=3.0cm]{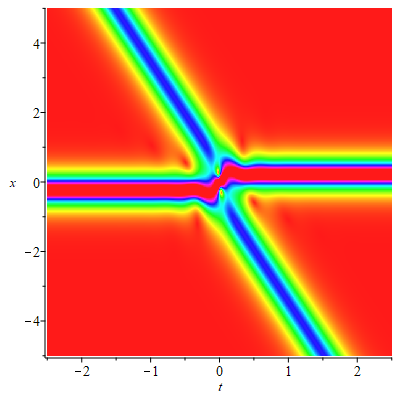}
\end{minipage}}
\subfigure[]{
\label{3gz}
\begin{minipage}[b]{0.3\textwidth}
\includegraphics[width=3.2cm]{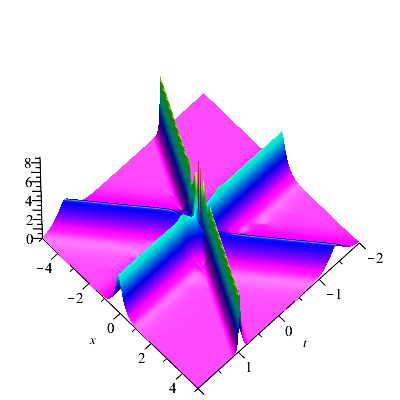}\\
\includegraphics[width=3.0cm]{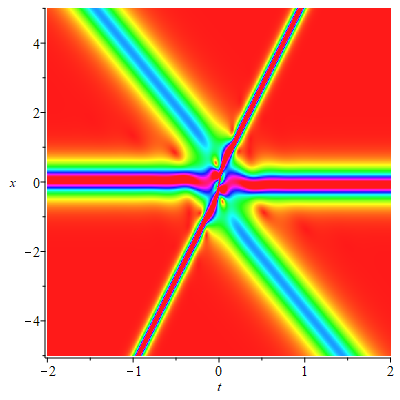}
\end{minipage}}
\subfigure[]{
\label{q81}
\begin{minipage}[b]{0.3\textwidth}
\includegraphics[width=3.2cm]{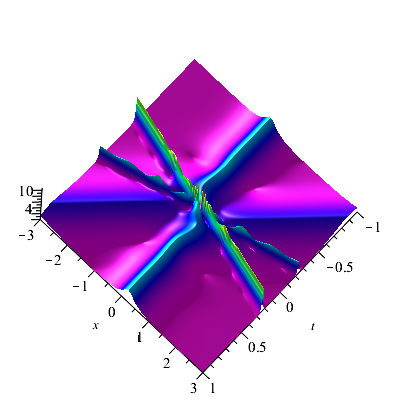}\\
\includegraphics[width=3cm]{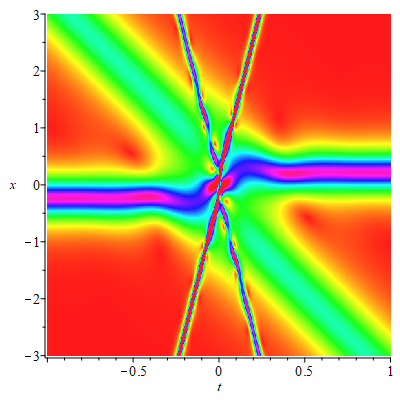}
\end{minipage}}
\caption{ 
(a) Two-soliton: $\alpha_{1}=\beta_{1}=\alpha_{3}=1$ and $\beta_{3}=0.5$; (b) Three-soliton: $\alpha_{1}=\beta_{1}=\alpha_{3}=\alpha_{5}=1$, $\beta_{3}=0.5$ and$\beta_{5}=2$;
  (c) Four-soliton: $\alpha_{1}=\beta_{1}=\alpha_{3}=\alpha_{5}=\beta_{7}=1$, $\beta_{3}=0.5$ and $\beta_{5}=\alpha_{7}=2$.}
\label{2gzzt}
\end{figure}

\begin{figure}[ht!]
\centering
\subfigure[]{
\label{q2gzsdg}
\begin{minipage}[b]{0.3\textwidth}
\includegraphics[width=3.4cm]{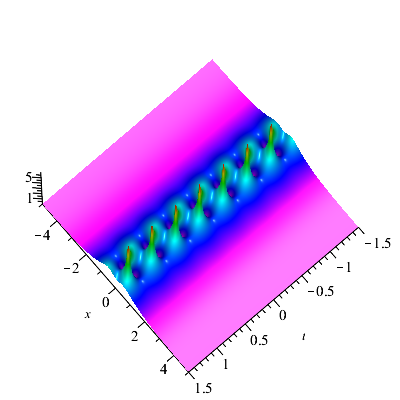}\\
\includegraphics[width=3.0cm]{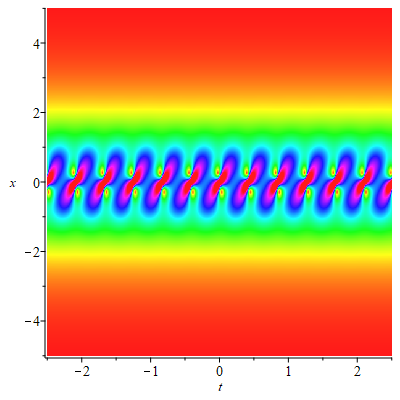}
\end{minipage}}
\subfigure[]{
\label{vc3gz}
\begin{minipage}[b]{0.3\textwidth}
\includegraphics[width=3.4cm]{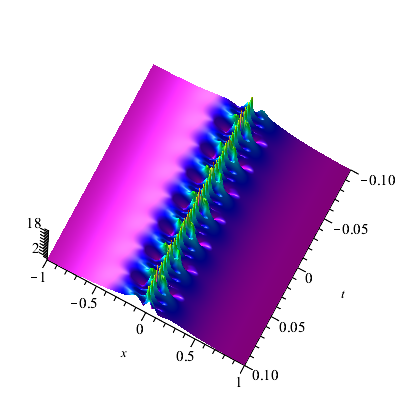}\\
\includegraphics[width=3.0cm]{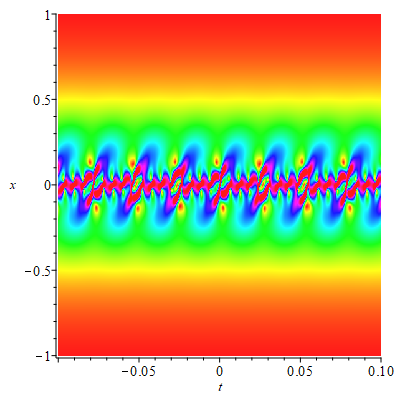}
\end{minipage}}
\subfigure[]{
\label{vc4gz}
\begin{minipage}[b]{0.3\textwidth}
\includegraphics[width=3.3cm]{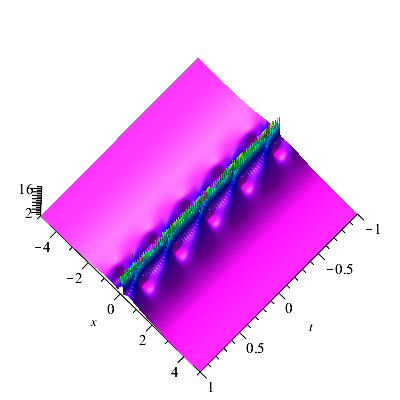}\\
\includegraphics[width=3cm]{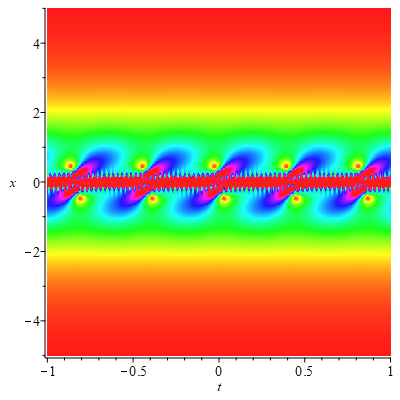}
\end{minipage}}
\caption{ (a) Velocity resonance two-soliton: $\alpha_{1}=\beta_{1}=1$ and $\alpha_{3}=\beta_{3}=0.5$; (b) Velocity resonance three-soliton: $\alpha_{1}=\beta_{1}=1$, $\alpha_{3}=\beta_{3}=2$ and $\alpha_{5}=\beta_{5}=3$; (c) Velocity resonance four-soliton:  $\alpha_{1}=\beta_{1}=1$, $\alpha_{3}=\beta_{3}=0.5$, $\alpha_{5}=\beta_{5}=2$ and $\alpha_{7}=\beta_{7}=3$.
}
\label{q2gzsdgzt}
\end{figure}

\begin{figure}[ht!]
\centering
\subfigure[]{
\label{n32gzgz1gz}
\begin{minipage}[b]{0.3\textwidth}
\includegraphics[width=3.3cm]{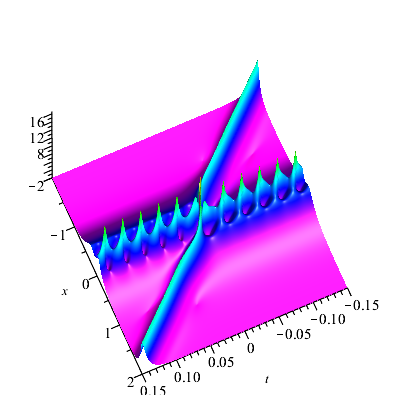}\\
\includegraphics[width=3.0cm]{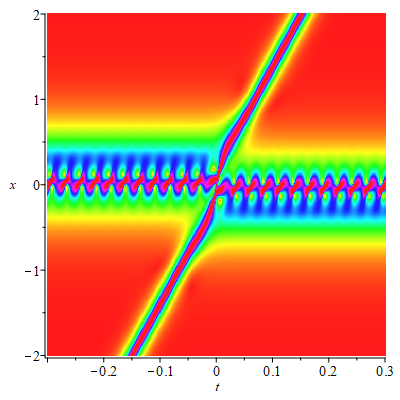}\\
\end{minipage}}
\subfigure[]{
\label{q82}
\begin{minipage}[b]{0.3\textwidth}
\includegraphics[width=3.3cm]{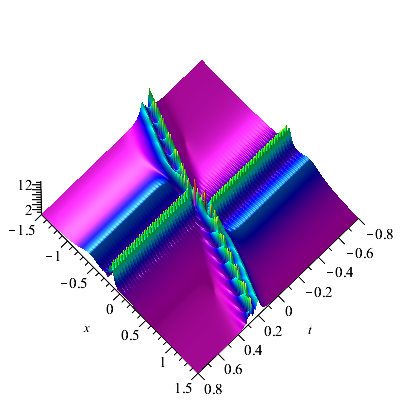}\\
\includegraphics[width=3cm]{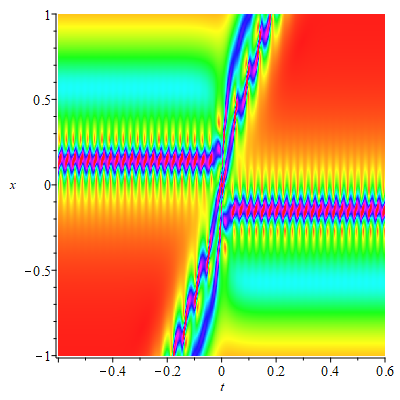}\\
\end{minipage}}
\subfigure[]{
\label{q83}
\begin{minipage}[b]{0.3\textwidth}
\includegraphics[width=3.3cm]{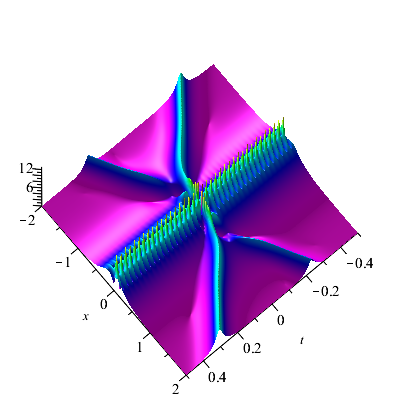}\\
\includegraphics[width=3cm]{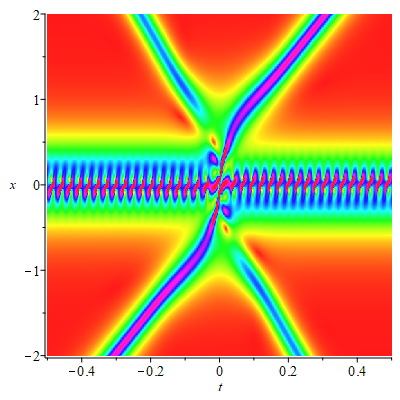}
\end{minipage}}
\caption{ (a) Elastic collision of one-soliton and velocity resonance two-soliton: $\alpha_{1}=\beta_{1}=\alpha_{5}=1$ and
$\alpha_{3}=\beta_{3}=\beta_{5}=2$; (b) Two velocity resonance two-soliton: $\alpha_{1}=\beta_{1}=\alpha_{5}=1$, $\alpha_{3}=\beta_{3}=\beta_{7}=2$, $\beta_{5}=1.5$ and $\alpha_{7}=\frac{\sqrt{11}}{2}$; (c) Elastic collision of two-soliton and velocity resonance two-soliton: $\alpha_{1}=\beta_{1}=\alpha_{5}=\beta_{7}=1$, $\alpha_{3}=\beta_{3}=2$,  $\beta_{5}=1.5$ and $\alpha_{7}=\frac{\sqrt{11}}{2}$.}
\label{q83zt}
\end{figure}

\begin{figure}[ht!]
\centering
\includegraphics[width=13cm]{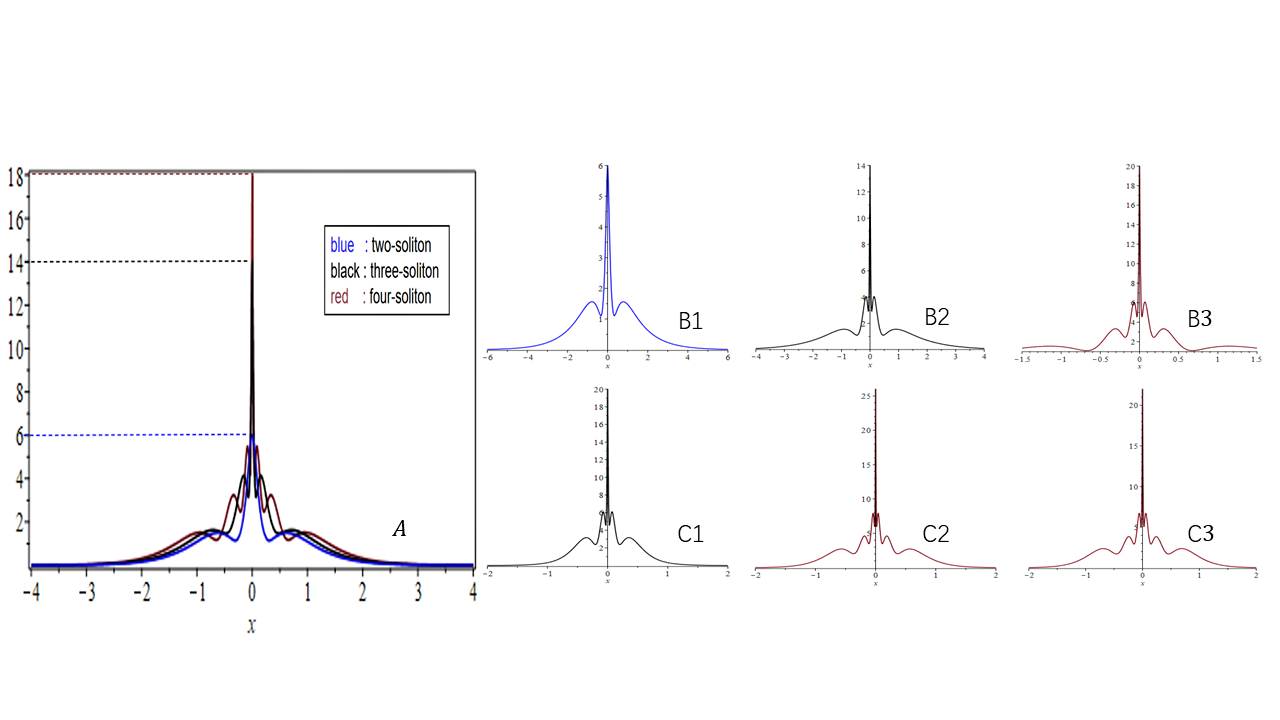}
 \caption{A: Cross section diagrams of two-soliton, three-soliton and four-soliton which plotted in Fig. \ref{2gzzt}; B1,B2 and B3: Cross section diagrams of  Velocity resonance two-soliton, three-soliton and four-soliton which plotted in Fig. \ref{q2gzsdgzt}; C1,C2 and C3: Cross section diagrams of elastic collision of one-soliton and velocity resonance two-soliton, two velocity resonance two-soliton and two-soliton and velocity resonance two-soliton which plotted in Fig. \ref{q83zt}. }
 \label{gzjzft1}
\end{figure}

{\bf \subsection{$n$-periodic solution}}

Taking $\lambda_{j}=i\beta_{j}$, $j=1,.. ,2n$, $2n=N$, the $n$-periodic solution can be constructed by the $N$-fold DT formula \eqref{dtf}. For example, letting $N=2$, we have the periodic (i.e. one-periodic) solution
\begin{equation}
\label{zqj}
q_{p}[2]=-2(\beta_{1}-\beta_{2})e^{i(K_{1}+K_{2})}\frac{i(\beta_{1}-\beta_{2})\cos{(K_{1}-K_{2})}-(\beta_{1}+\beta_{2})\sin{(K_{1}-K_{2})}}
{[-i(\beta_{1}-\beta_{2})\cos{(K_{1}+K_{2})}-(\beta_{1}+\beta_{2})\sin{(K_{1}-K_{2})}]^{2}},
\end{equation}
where $K_{1}=2\beta_{2}^{2}(2\beta_{2}^{2}t-x)$, $K_{2}=2\beta_{1}^{2}(2\beta_{1}^{2}t-x)$, and the period of $q_{p}[2]$ is $ \frac{\pi}{\beta_{1}^{2}-\beta_{2}^{2}}$. When $\beta_{1}$ and $\beta_{2}$ are of the same sign, the maximum and minimum values of  $q_{p}[2]$ are 
$2|\beta_{1} + \beta_{2}|$ and $2|\beta_{1}-\beta_{2}|$, respectively. Otherwise, the maximum value of  $q_{p}[2]$ is $2|\beta_{1} -\beta_{2}|$, and the minimum value is $2|\beta_{1}+\beta_{2}|$. That is to say, the amplitude and period of $q_{p}[2]$ rely on parameters 
$\beta_{1}$ and $\beta_{2}$. The periodic solution in Fig. \ref{q23} is plotted by taking $\beta_{1}=1$ and $\beta_{2}=0.5$. Intuitively, a one-periodic solution looks like a set of parallel solitons.

The dynamics of double-periodic and three-periodic solutions are displayed in Fig. \ref{szq2} and  Fig. \ref{q613zq}, respectively.  Double-periodic solutions are two sets of parallel solitons with different directions. A peak is created at the location where the periodic waves collide, so the double-periodic dynamic evolution looks like a set of parallel peaks with an equal amplitude as shown. The density diagram shows the double-periodic waves suffer phase shifts, very similar to the elastic collision of solitons. The dynamic evolution diagram of $n$-periodic solution ($n>2$) shows more complex structures, because the elastic collision of three periodic solutions with different directions and velocities produce peaks with different amplitudes and sizes. Due to the frequent collisions of periodic waves resulting in the frequent phase shift, the density graph of $n$-periodic wave contains irregular curves, as exhibited in Fig. \ref{q613zq}.

\begin{figure}[ht!]
\centering
\subfigure[]{
\label{q23}
\begin{minipage}[b]{0.3\textwidth}
\includegraphics[width=2.8cm]{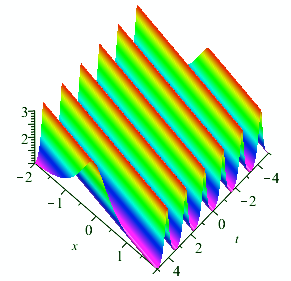}\\
\includegraphics[width=3.2cm]{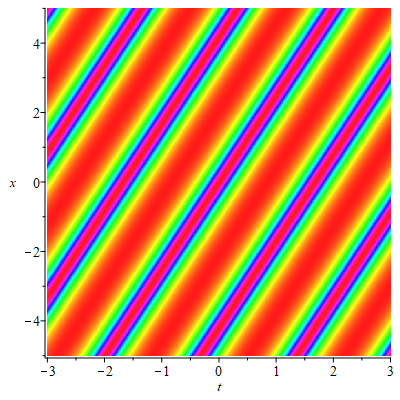}
\end{minipage}}
\subfigure[]{
\label{szq2}
\begin{minipage}[b]{0.3\textwidth}
\includegraphics[width=3.2cm]{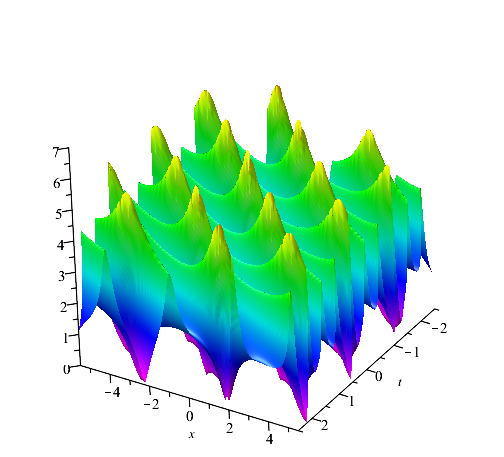}\\
\includegraphics[width=3.2cm]{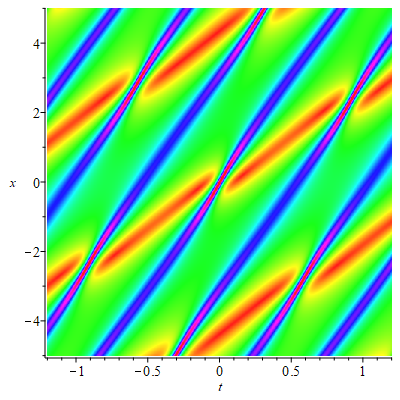}
\end{minipage}}
\subfigure[]{
\label{q613zq}
\begin{minipage}[b]{0.3\textwidth}
\includegraphics[width=3.2cm]{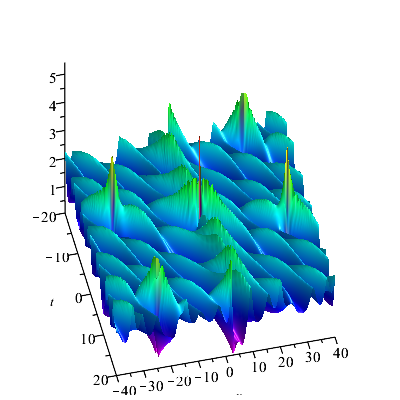}\\
\includegraphics[width=3.2cm]{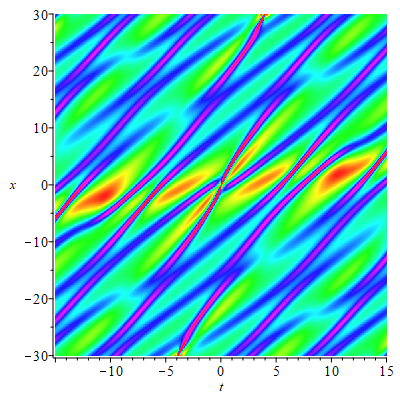}
\end{minipage}}
\caption{(a) Periodic solution: $\beta_{1}=1$ and $\beta_{2}=0.5$; (b) Double-periodic solution: $\beta_{1}=1$, $\beta_{2}=0.5$, $\beta_{3}=0.1$ and $\beta_{4}=\sqrt{2}$;  (c) Three-periodic solution: $\beta_{1}=0.1$, $\beta_{2}=0.9$, $\beta_{3}=0.2$, $\beta_{4}=0.8$, $\beta_{5}=0.3$ and  $\beta_{6}=0.7$.}
\label{nperiodic}
\end{figure}

In Fig. \ref{ztbjt}, we construct $n$-periodic solutions with different periods of the same amplitude and show the cross-section of $n$-periodic solutions for $n=1$, $2$, $3$, and $4$. When $\beta_{j}$ are of the same sign, the maximum amplitude of $n$-periodic wave solution $q_{p}[N]$ is $2|\sum_{j=1}^{2n}\beta_{j}|$ with $2n=N$. From the cross-sectional view, when $n$ is larger, the periodicity of the periodic wave is worse, and even the profile of a quasi rogue wave solution appears. The reason is that when $n$ increases, it is rare for $n$ single periodic waves to collide completely. In the figure below, we use different colors to represent $n$-periodic solutions with different parameters.

\begin{figure}[ht!]
\centering
\subfigure[]{
\label{x01zqbjt}
\begin{minipage}[b]{0.2\textwidth}
\includegraphics[width=3.1cm]{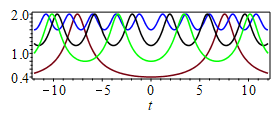}\\
\includegraphics[width=3.1cm]{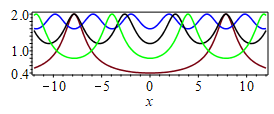}
\end{minipage}}
\subfigure[]{
\label{x02zqbjt}
\begin{minipage}[b]{0.2\textwidth}
\includegraphics[width=3.1cm]{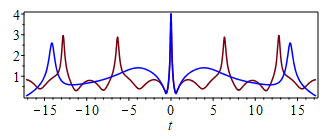}\\
\includegraphics[width=3.1cm]{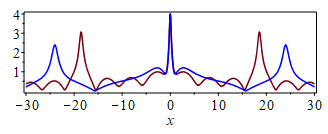}
\end{minipage}}
\subfigure[]{
\label{x03zqbjt}
\begin{minipage}[b]{0.2\textwidth}
\includegraphics[width=3.1cm]{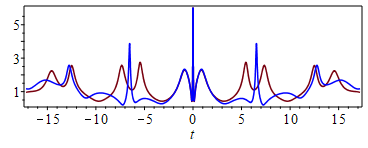}\\
\includegraphics[width=3.1cm]{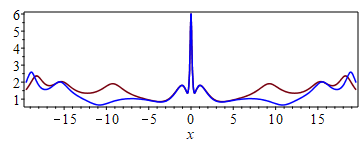}
\end{minipage}}
\subfigure[]{
\label{x04zqbjt}
\begin{minipage}[b]{0.2\textwidth}
\includegraphics[width=3.7cm]{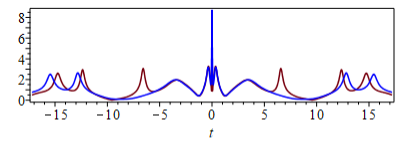}\\
\includegraphics[width=3.7cm]{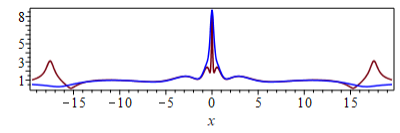}
\end{minipage}}
\caption{Periodic solution (a) Blue:$\beta_{1}=0.1$ and $\beta_{2}=-0.9$;
            Black:$\beta_{1}=0.2$ and $\beta_{2}=-0.8$;
            Green:$\beta_{1}=0.3$ and $\beta_{2}=-0.7$;
            Red:$\beta_{1}=0.4$ and $\beta_{2}=-0.6$;
 Double-periodic solution (b) Blue:$\beta_{1}=0.3$, $\beta_{2}= 0.7$, $\beta_{3}= 0.4$ and $\beta_{4}=0.6$;
 Red:$\beta_{1}=0.1$, $\beta_{2}=0.9$, $\beta_{3}=0.2$ and $\beta_{4}=0.8$;
Three-periodic solution (c) Blue:$\beta_{1} =0.1$, $\beta_{2} =0.9$, $\beta_{3} = 0.2$, $\beta_{4} = 0.8$, $\beta_{5} = 0.4$ and $\beta_{6} = 0.6$;
Red: $\beta_{1} = 0.1$, $\beta_{2} =0.9$, $\lambda_{3} = 0.2$, $\lambda_{4} = 0.8$, $\lambda_{5} = 0.3$ and $\lambda_{6} =0.7$;
Four-periodic solution (c) Blue: $\beta_{1} =0.1$, $\beta_{2} =0.9$, $\beta_{3} = 0.2$, $\beta_{4} = 0.8$, $\beta_{5} = 0.3$, $\beta_{6} =0.7$, $\beta_{7} = 0.4$ and $\beta_{8} = 0.6$;
Red: $\beta_{1} = 0.15$, $\beta_{2} =0.85$, $\beta_{3} = 0.2$, $\beta_{4} = 0.8$, $\beta_{5} = 0.3$, $\beta_{6} =0.7$,$\beta_{7} = 0.4$ and $\beta_{8} = 0.6$.
}
\label{ztbjt}
\end{figure}

{\bf \subsection{Multi-soliton on the $n$-periodic background}}

We can get the $m$-soliton on the $n$-periodic background by letting   $\lambda_{2k}$=$-\lambda_{2k-1}^{*}$=$-\alpha_{2k-1}+i \beta_{2k-1}$, $k=1,..,m.$ $\lambda_{j}=i\beta_{j}$, $j=2m+1$, .. ,$2m+2n$, $2(m+n)=N$ in the $N$-fold DT. Similar to the previous subsection, it is important to give here the formula for the maximum amplitude of $m$-solitons on the $n$-periodic background. When $\beta_{j}$ share the same sign, the maximum amplitude of  $m$-soliton on the $n$-periodic background is $4|\sum_{k=1}^{m}\beta_{2k-1}|+2\sum_{j=1}^{2n}|\beta_{j}|$. As an illustration, we just give the cases of $n=1$ and $2$.

{\bf Case n=1.} When $m=1$, one-soliton on the periodic background is constructed. It is observed from Fig. \ref{2} that a soliton on the periodic wave background shares a similar feature as a breather due to the interception of a periodic background. The period of the periodic background is adjusted by the values of the spectral parameters. Fig. \ref{2} is the soliton solution under different periodic backgrounds. Furthermore, it is seen that the local structure of a soliton on the periodic background has a single peak with two caves which is similar to a rogue wave. This implies rogue wave solutions can also be generated from zero seed solutions, but the feasibility remains to be proved. Then elastic collision of two solitons on a periodic background is constructed by setting $m=2$  (see Fig. \ref{n32gzp}). In particular, if $\beta_{2j-1}^{2}-\alpha_{2j-1}^{2}$=$v_{0}$  ($v_{0}$ is constant), $j=1$, $2$, then the velocity resonance of two solitons on a periodic-background is derived (see Fig. \ref{n6gzgzzq}). When $m=3$, the elastic collision of three-solitons on a periodic background is constructed (see Fig. \ref{q84}). In particular, if $\beta_{2j-1}^{2}-\alpha_{2j-1}^{2}$=$v_{0}$, $j=1$, $2$ and $v_{0}$ is constant, the elastic collision of  velocity resonance two-soliton and one-soliton on the periodic background takes place (see Fig. \ref{q851}).
 \begin{figure}[ht!]
\centering
\subfigure[]{
\label{n21gzp}
\includegraphics[width=7.5cm]{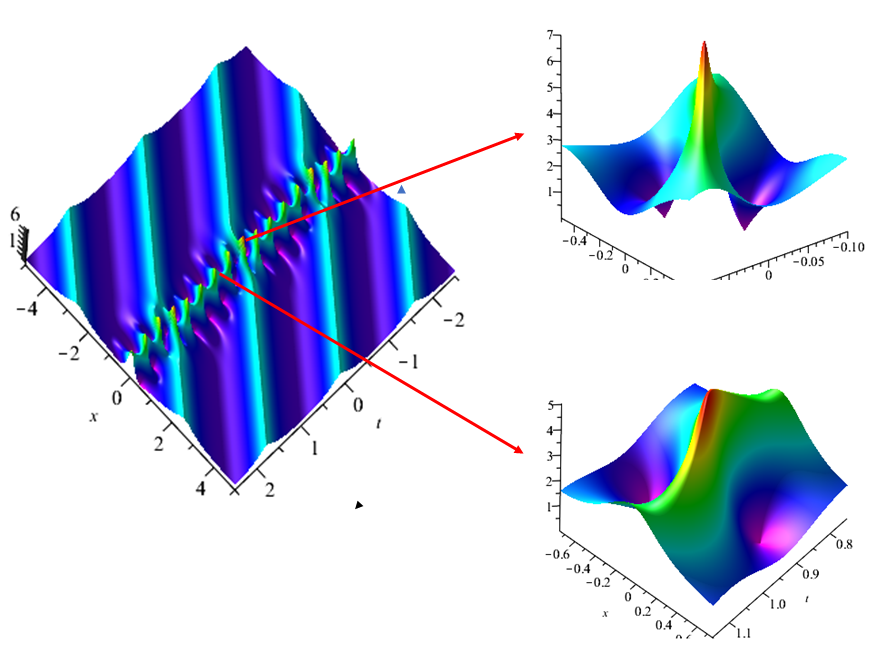}
\includegraphics[width=6.5cm]{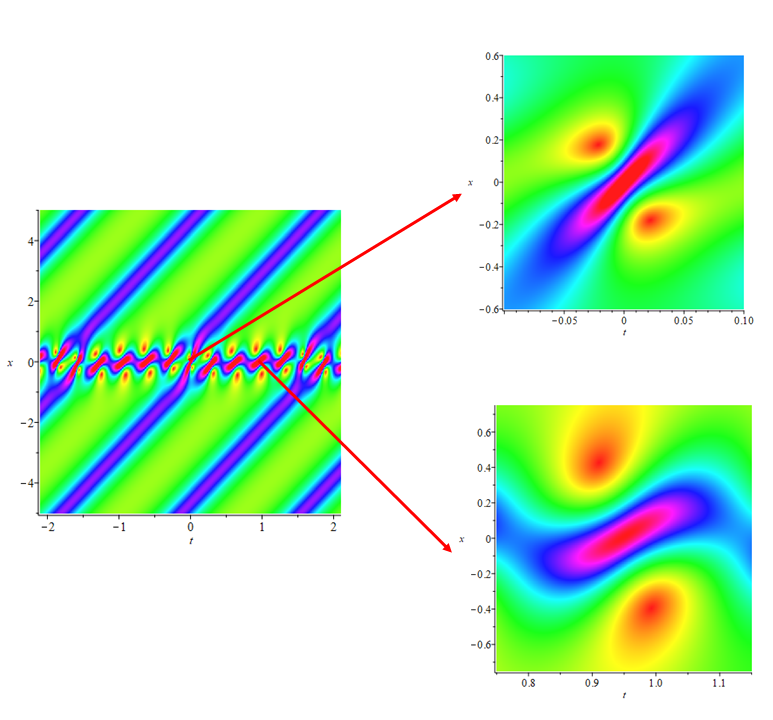}}
\subfigure[]{
\label{n21gzp2}
\includegraphics[width=4.0cm]{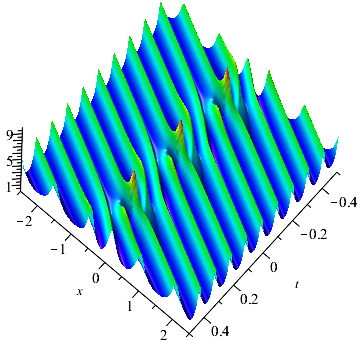}
\includegraphics[width=3.2cm]{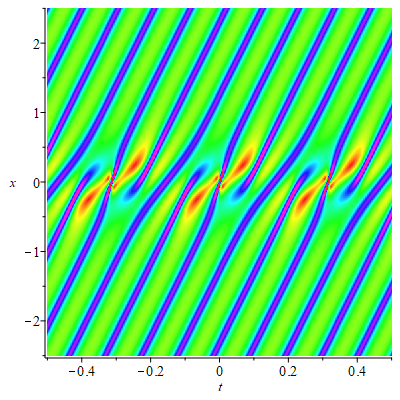}
\includegraphics[width=4.0cm]{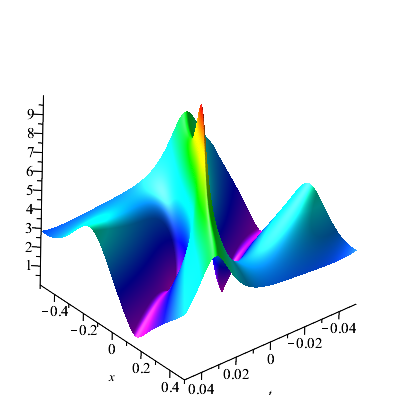}
\includegraphics[width=3.2cm]{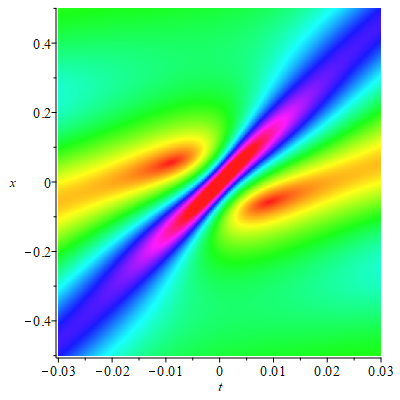}
}
\caption{ One-soliton on  different periodic backgrounds (a): $\alpha_{1}=\beta_{1}=\beta_{3}=1$ and $\beta_{4}=0.5$; (b): $\alpha_{1}=\beta_{1}=\beta_{3}=1$ and $\beta_{4}=2$.}
\label{2}
\end{figure}

\begin{figure}[ht!]
\centering
\subfigure[]{
\label{n32gzp}
\includegraphics[width=3.6cm]{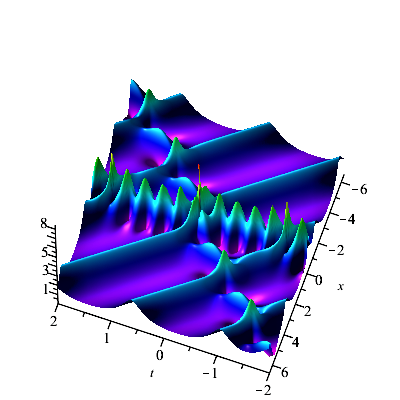}
\includegraphics[width=3.1cm]{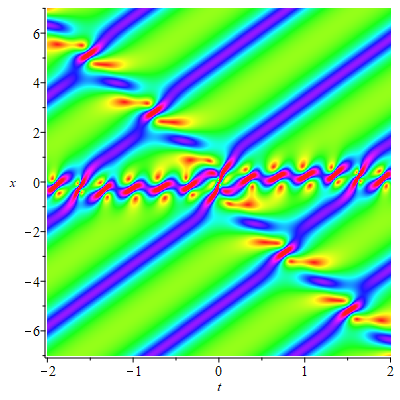}
\includegraphics[width=3.5cm]{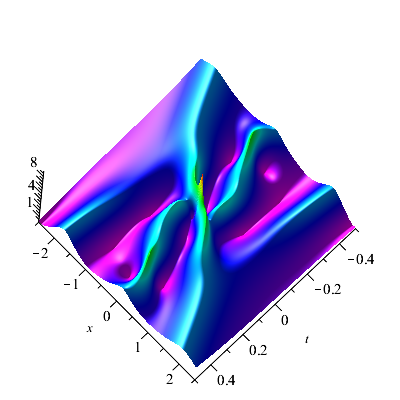}
\includegraphics[width=3.1cm]{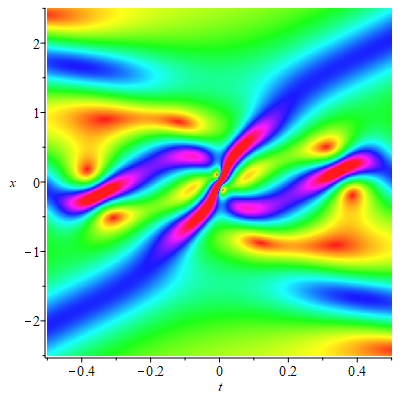}}
\subfigure[]{
\label{n6gzgzzq}
\includegraphics[width=3.5cm]{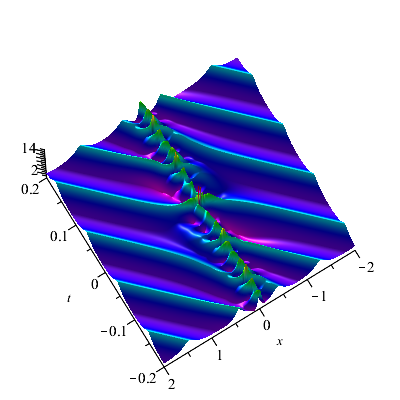}
\includegraphics[width=3.1cm]{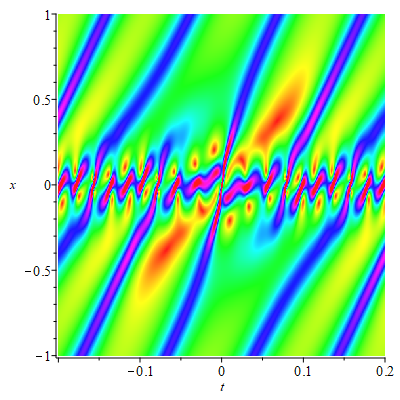}
\includegraphics[width=3.5cm]{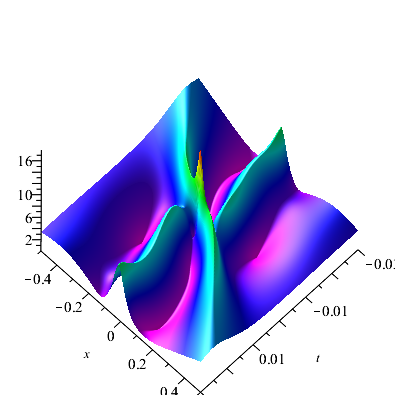}
\includegraphics[width=3.1cm]{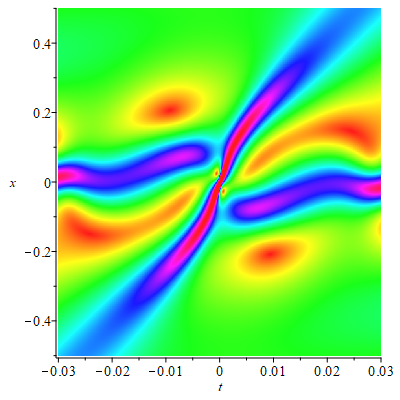}}
\caption{(a) Two-soliton on the periodic background: $\alpha_{1}=\beta_{1}=\alpha_{3}=\beta_{5}=1$ and $\beta_{3}=\beta_{6}=0.5$;
 (b) Velocity resonance two-soliton on the periodic background: $\alpha_{1}=\beta_{1}=\beta_{6}=1$ and $\alpha_{3}=\beta_{3}=\beta_{5}=2$.}
\end{figure}

\begin{figure}[ht!]
\centering
\subfigure[]{
\label{q84}
\begin{minipage}[b]{0.3\textwidth}
\includegraphics[width=3.3cm]{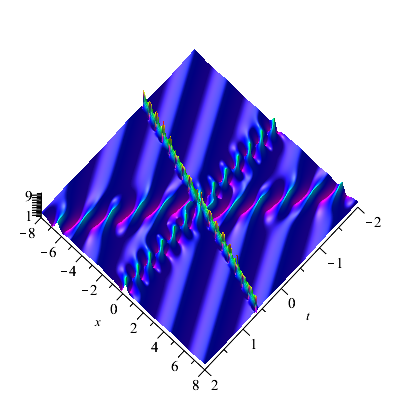}\\
\includegraphics[width=3cm]{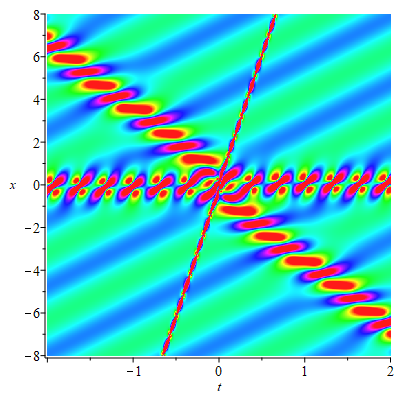}
\end{minipage}}
\subfigure[]{
\label{q851}
\begin{minipage}[b]{0.3\textwidth}
\includegraphics[width=3.3cm]{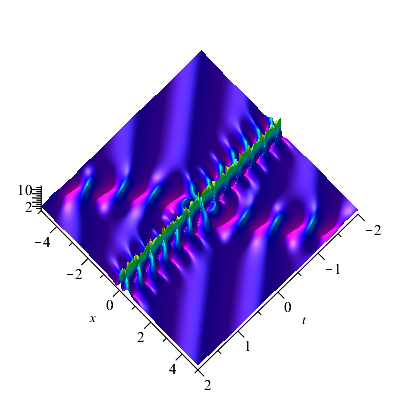}\\
\includegraphics[width=3cm]{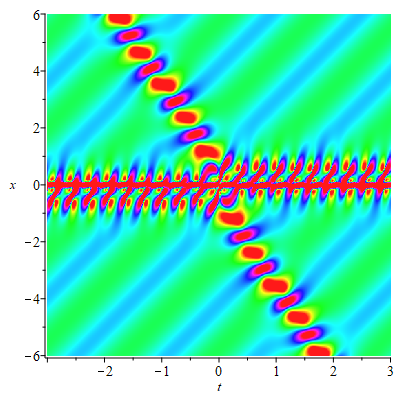}
\end{minipage}}
\subfigure[]{
\label{3vc1zq}
\begin{minipage}[b]{0.3\textwidth}
\includegraphics[width=3.3cm]{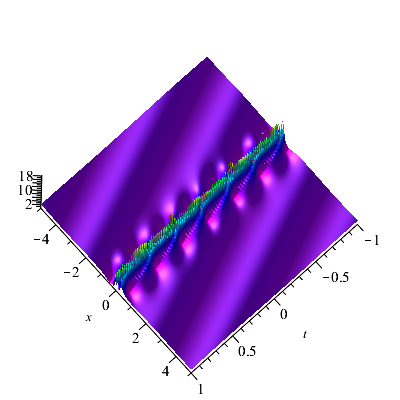}\\
\includegraphics[width=3cm]{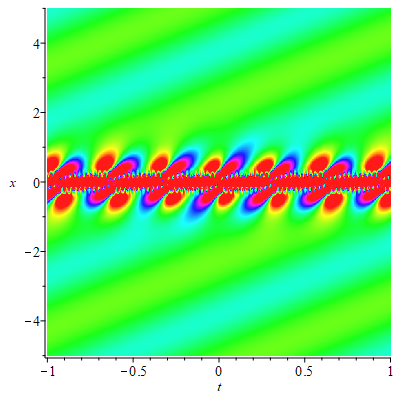}
\end{minipage}}
\caption{(a) Three soliton on the periodic background:
$\alpha_{1}=\beta_{1}=\alpha_{3}=\alpha_{5}=\beta_{8}=1$, $\beta_{3}=2$, $\beta_{5}=0.5$ and $\beta_{7}=0.1$; (b) Elastic collision of velocity resonance two-soliton and one-soliton on the periodic background: $\alpha_{1}=\beta_{1}=\alpha_{5}=\beta_{8}=1$, $\alpha_{3}=\beta_{3}=2$, $\beta_{5}=0.5$ and $\beta_{7}=0.1$.}
\end{figure}

 {\bf Case n=2.} One-soliton on the double-periodic background can be constructed by taking $n_{1}=1$. In order to see the structure of the soliton solution on the double-periodic background more clearly, we give a local magnification in the right of Fig. \ref{n6gzszq}. Moreover, if $n_{1}=2$, elastic collision of two-soliton on the double-periodic background can be constructed (see Fig. \ref{q86}). Particularly, if $\beta_{j}^{2}-\alpha_{j}^{2}$=$v_{0}$  ($j=1, 2$, $v_{0}$ is constant), then the velocity resonance of two-solitons on double-periodic background is derived (see Fig. \ref{q87}). The dynamic diagrams of these new solutions show very complex and interesting nonlinear structures, which are first presented in this investigation.

\begin{figure}[ht!]
\centering
\includegraphics[width=3.4cm]{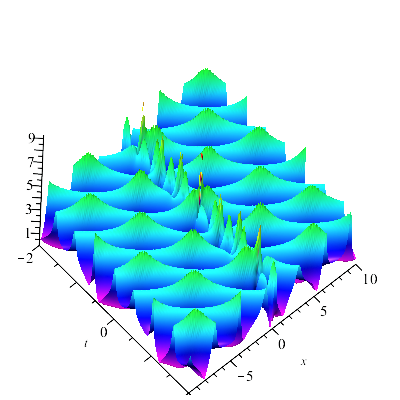}
\includegraphics[width=3cm]{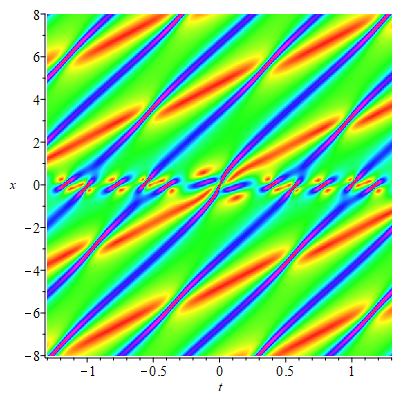}
\includegraphics[width=3.4cm]{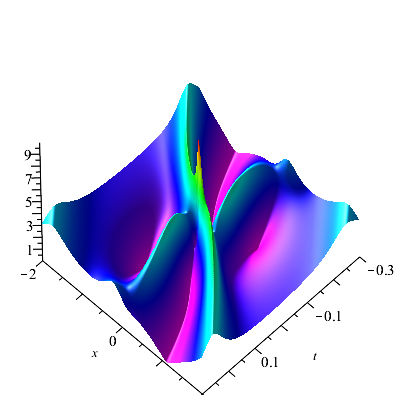}
\includegraphics[width=3cm]{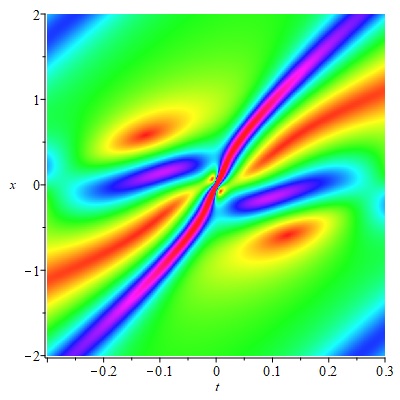}
\caption{ One-soliton on the double-periodic background: $\alpha_{1}=\beta_{1}=\beta_{3}=1$, $\beta_{4}=\beta_{5}=0.5$, $\beta_{5}=0.1$ and $\beta_{6}=\sqrt{2}$.}
\label{n6gzszq}
\end{figure}

\begin{figure}[ht!]
\centering
\subfigure[]{
\label{q86}
\includegraphics[width=3.5cm]{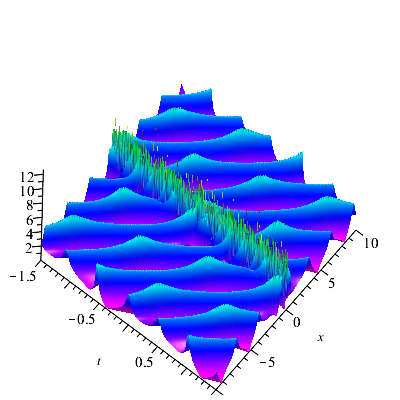}
\includegraphics[width=3cm]{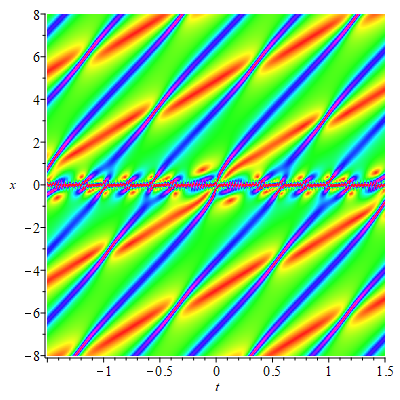}}
\subfigure[]{
\label{q87}
\includegraphics[width=3.5cm]{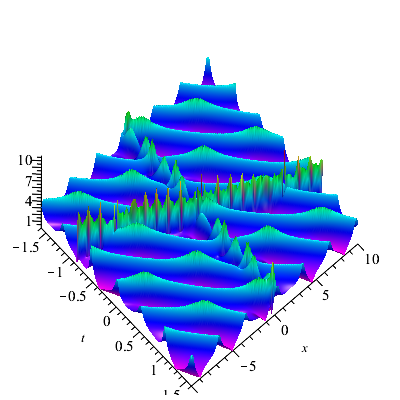}
\includegraphics[width=3cm]{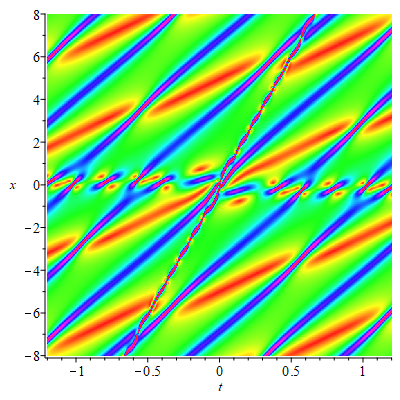}}
\caption{(a) Velocity resonance two-soliton on the double-periodic background: $\alpha_{1}=\beta_{1}=\beta_{5}=1$, $\alpha_{3}=\beta_{3}=2$,  $\beta_{6}=0.5$, $\beta_{7}=0.1$ and $\beta_{8}=\sqrt{2}$; (b) Elastic collision two-soliton on the double-periodic background: $\alpha_{1}=\beta_{1}=\alpha_{3}=\beta_{5}=1$, $\beta_{3}=2$,  $\beta_{6}=0.5$, $\beta_{7}=0.1$ and $\beta_{8}=\sqrt{2}$.}
\end{figure}


{\bf \section{Higher-order soliton without and with the $n$-periodic background}}

When considering iterations of DT, the same seed cannot be used twice. So, how to generate higher-order soliton is an interesting problem, which requires consider the nontrivial DT corresponding to $\lambda_{j}\rightarrow\lambda_{1}$ or $\lambda_{2}$, namely the degenerate DT. To construct higher-order soliton on the $n$-periodic background, we also need to construct semi-degenerate DT.

{\bf \subsection{Degenerate and semi-degenerate DT}}

The specific form of the degenerate and semi-degenerate DT is given below to construct the higher-order solutions and higher-order solutions with an $n$-periodic background.
\begin{theorem} {\bf (Degenerate and semi-degenerate DT):} Let $\lambda_{j}\rightarrow\left\{
                                             \begin{array}{ll}
                                               \lambda_{1}, & \hbox{$j=odd\leq N-2n$} \\
                                               \lambda_{2}, & \hbox{$j=even\leq N-2n$}
                                             \end{array}
                                           \right.$,
degenerate DT ($n=0$) and semi-degenerate DT ($n\neq0$) can be derived from the $N$-fold DT formula \eqref{dtf} by a Taylor expansion. It reads
\begin{equation}\label{ddtf}
q_{N}=\frac{\left|M'\right|^{2}}{|P'|^{2}}q+2i\frac{|M'||H^{'}|}{|P^{'}|^{2}},
\end{equation}
where
\begin{equation}
\begin{split}
&M'=\left\{
         \begin{array}{ll}
           M'_{jk}, & 1\leq j,k \leq N-2n \\
         M_{jk}, & N-2n<j,k \leq N
         \end{array}
       \right.,\\
&H'=\left\{
         \begin{array}{ll}
           H'_{jk}, & 1\leq j,k \leq N-2n \\
         H_{jk}, & N-2n <j,k \leq N
         \end{array}
       \right.,\\
&P'=\left\{
         \begin{array}{ll}
           P'_{jk}, & 1\leq j,k \leq n-2n \\
         P_{jk}, & n-2n< j,k \leq n
         \end{array}
       \right.,\\
\end{split}
\end{equation}

$$M_{jk}^{\prime}=\lim_{\epsilon \rightarrow 0}\frac{\partial^{n_{j}-1} M'_{jk}\left(\lambda_{j}+\epsilon\right)}{(n_{j}-1)! \partial \epsilon^{n_{j}-1}},$$

$$H_{jk}^{\prime}=\lim_{\epsilon \rightarrow 0}\frac{\partial^{n_{j}-1} H'_{jk}\left(\lambda_{j}+\epsilon\right)}{(n_{j}-1)! \partial \epsilon^{n_{j}-1}},$$

$$P_{jk}^{\prime}=\lim_{\epsilon \rightarrow 0}\frac{\partial^{n_{j}-1} P'_{jk}\left(\lambda_{j}+\epsilon\right)}{(n_{j}-1)! \partial \epsilon^{n_{j}-1}},$$

and $$n_{j}=\left[\frac{j+1}{2}\right].$$
\end{theorem}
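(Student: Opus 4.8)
The plan is to realise \eqref{ddtf} as the confluent (coalescence) limit of the non-degenerate formula \eqref{dtf}, treating the entire right-hand side as an analytic function of the spectral parameters and removing the resulting $0/0$ indeterminacy by Taylor expansion. For the zero seed the eigenfunctions \eqref{sef}, and hence every entry of $M$, $H$, $P$, are entire functions of the relevant $\lambda_j$, so \eqref{dtf} is an analytic object in $(\lambda_1,\dots,\lambda_N)$ and the degenerate transform is simply its restriction to the coalescing locus.

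First I would regularise the degeneration so that Theorem 1 still applies. I keep the $2n$ parameters with $j>N-2n$ fixed and distinct, and for the clustering indices I introduce one small parameter $\epsilon$, replacing the odd $\lambda_j$ ($j\le N-2n$) by $\lambda_1+\epsilon,\lambda_1+2\epsilon,\dots$ in order of increasing $j$, and the even $\lambda_j$ ($j\le N-2n$) by $\lambda_2+\epsilon,\lambda_2+2\epsilon,\dots$. For $\epsilon\neq0$ all $\lambda_j$ are distinct, so \eqref{dtf} holds verbatim, and $q_N:=\lim_{\epsilon\to0}q[N]$ is the candidate for \eqref{ddtf}; since $q[N]$ solves \eqref{kns} for every $\epsilon\neq0$ and depends analytically on $\epsilon$, the limit is again a solution, so it only remains to compute it.

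Next I would evaluate the limit determinant by determinant. Fixing the $\lambda_1$-cluster of $p$ rows and regarding the $j$-th such row as a vector-valued analytic function $\vec f(\lambda_1+c_j\epsilon)$, I expand each row in its Taylor series about $\lambda_1$ and use multilinearity of the determinant. The leading behaviour as $\epsilon\to0$ factors as a purely scalar prefactor---a fixed power $\epsilon^{\binom{p}{2}}$ times the Vandermonde determinant $\prod_{i<j}(c_j-c_i)$ (up to constant factorials)---multiplying the determinant whose $p$ rows have been replaced by $\vec f(\lambda_1),\ \vec f\,'(\lambda_1),\ \tfrac1{2!}\vec f\,''(\lambda_1),\dots$; an independent prefactor arises from the $\lambda_2$-cluster. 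The decisive point is that this prefactor depends only on the cluster sizes and the weights $c_j$, \emph{not} on the row function $\vec f$, so one and the same scalar appears in $|M|$, $|H|$ and $|P|$; it therefore cancels in both ratios of \eqref{dtf}, giving $|M|^2/|P|^2\to|M'|^2/|P'|^2$ and $|M||H|/|P|^2\to|M'||H'|/|P'|^2$, which is precisely \eqref{ddtf}. The bookkeeping fixing the derivative orders is then immediate: in the $\lambda_1$-cluster the row $j=1,3,5,\dots$ is the $\tfrac{j+1}{2}$-th member and carries the derivative of order $\tfrac{j+1}{2}-1$, while in the $\lambda_2$-cluster the row $j=2,4,6,\dots$ carries order $\tfrac{j}{2}-1$; both equal $n_j-1$ with $n_j=[\tfrac{j+1}{2}]$, and the Taylor expansion supplies exactly the factor $1/(n_j-1)!$, matching the definitions of $M'_{jk},H'_{jk},P'_{jk}$.

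The step I expect to be the main obstacle is the rigorous justification of this confluence limit: one must prove that the leading prefactor is genuinely common to $M$, $H$ and $P$ (so that it cancels rather than leaving residual powers of $\epsilon$) and that the omitted higher-order Taylor terms contribute nothing in the limit. The cleanest way to secure both points is to perform the coalescence through successive elementary row operations (divided differences) inside each cluster, which replaces the near-equal rows by difference quotients whose $\epsilon\to0$ limits are exactly the normalised derivative rows and which makes the factored powers of $\epsilon$ manifest and identical across the three determinants; verifying that these row operations interact correctly with the interleaving of the two clusters, and with the untouched $2n$ rows, is the one place where genuine care is required.
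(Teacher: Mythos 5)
Your proposal is correct and takes essentially the same route as the paper: its proof likewise realises the (semi-)degenerate DT as a confluence limit of the $N$-fold formula \eqref{dtf}, Taylor-expanding the coalescing rows, using row subtractions (your divided-difference variant) to extract the powers of $\varepsilon$, and letting the common prefactors cancel between the numerator and denominator determinants while the $2n$ non-degenerate rows are left untouched. The only cosmetic difference is that you coalesce all rows at once with a single $\epsilon$ and distinct multiples, invoking multilinearity and a Vandermonde prefactor, whereas the paper introduces a separate small parameter for each coalescing pair of rows and performs the extraction sequentially by explicit row operations.
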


\begin{proof}

As stated in the above, only the case where $N$ is an even number is considered, so we can take $N=2k$. Let $\lambda_{j}\rightarrow\left\{ 
                                             \begin{array}{ll}
                                               \lambda_{1}, & \hbox{$j$=odd} \\
                                               \lambda_{2}, & \hbox{$j$=even}
                                             \end{array}
                                           \right. $ in the formula \eqref{dtf}. 
                                           
First, perform the first order Taylor expansion in all the elements of the first and second rows with respect to $\varepsilon_{1}$. Extract $\varepsilon_{1}$ in the first and second rows, then take $\varepsilon_{1} \rightarrow 0$;

Second, take $\lambda_{2k-1-2n}=\lambda_{1}+\varepsilon_{2k-1-2n}$ and $\lambda_{2k-2n}=-\lambda_{1}+\varepsilon_{2k-1-2n}$, and do the 
$(k-n)$-th order Taylor expansion in all the elements of the $2k-1-2n$-th and $2k-2n$-th rows with respect to $\varepsilon_{2k-1-2n}$. Subtract the first, third, ... , $(2k-3-2n)$-th row from the $(2k-1-2n)$-th rows and subtract the second, fourth, ... , $(2k-2-2n)$-th row from the $2k-2n$-th rows. Extract $\varepsilon_{2k-1-2n}^{k-1-n}$ in the $(2k-1-2n)$-th rows and $(2k-2n)$-th rows, then take $\varepsilon_{2k-1-2n} \rightarrow 0$, where $\varepsilon_{2k-1-2n}$ is the real constant;

All the elements of the $(2k-2n+1)$-th to the  $2k$-th rows remain unchanged. Finally, the degenerate and semi-degenerate DT formula $q_{N}$ can be derived through determinant calculations.
\end{proof}

The higher-order soliton and higher-order soliton on the $n$-periodic background can be derived by the degenerate and semi-degenerate DT, respectively.

{\bf \subsection{Higher-order soliton derived by the degenerate DT}}

The higher-order soliton solution can be constructed by using the degenerate DT. Let us first consider the amplitude of the higher-order soliton solution. Taking a seed solution $q=0$ in Eq. \eqref{ddtf}, leads to
\begin{equation}\label{25}
q_{N}=2i\frac{|M'H'|}{|P^{'2}|}.
\end{equation}
The maximum amplitude of the higher-order soliton $q_{N}$ can be calculated at the origin $(0,0)$ without loss of generality. Obviously,
\begin{equation}
q_{N}(0,0)=2i\frac{|M'(0,0)H'(0,0)|}{|P'(0,0)^{2}|}=2i\frac{|H'(0,0)|}{|P'(0,0)|}
=-iN(\lambda_{1}+\lambda_{2}).
\end{equation}
Since the soliton is constructed for $\lambda_{2}=-\lambda_{1}^{*}=-\alpha_{1}+i\beta_{1}$, the  maximum amplitude of higher-order soliton $q_{N}$ in \eqref{25} is $2N|\beta_{1}|$.

For example, by setting $N = 4$, $6$ and $8$ in Theorem $2$,  the second, third and fourth-order solitons with $\alpha_{1}=1$ and $\beta_{1}=1$ are depicted in Fig. \ref{gjgz}. 
\begin{figure}[ht!]
\centering
\subfigure[]{
\label{2p}
\begin{minipage}[b]{0.22\textwidth}
\includegraphics[width=3.2cm]{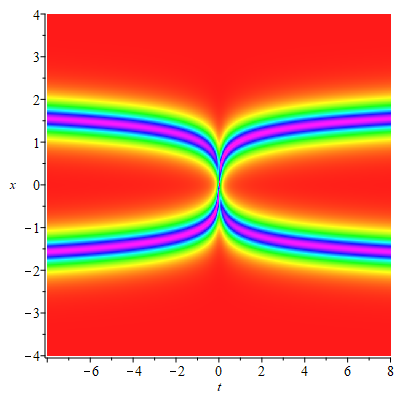}
\end{minipage}}
\subfigure[]{
\label{3p}
\begin{minipage}[b]{0.22\textwidth}
\includegraphics[width=3.2cm]{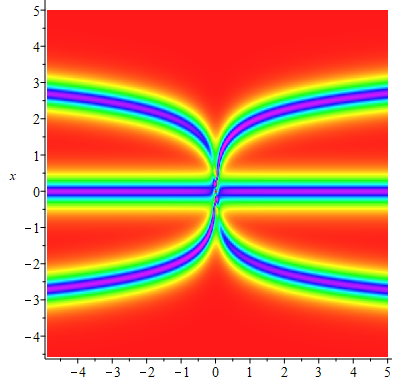}
\end{minipage}}
\subfigure[]{
\label{gq4}
\begin{minipage}[b]{0.22\textwidth}
\includegraphics[width=3.2cm]{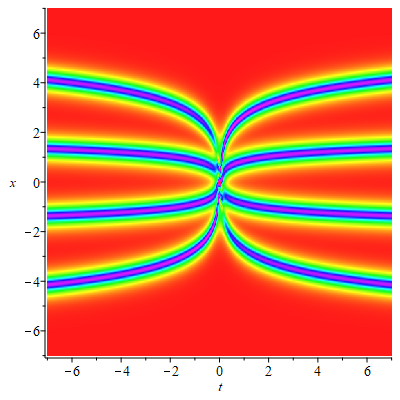}
\end{minipage}}
\subfigure[]{
\label{gjgz234}
\begin{minipage}[b]{0.22\textwidth}
\includegraphics[width=3.2cm]{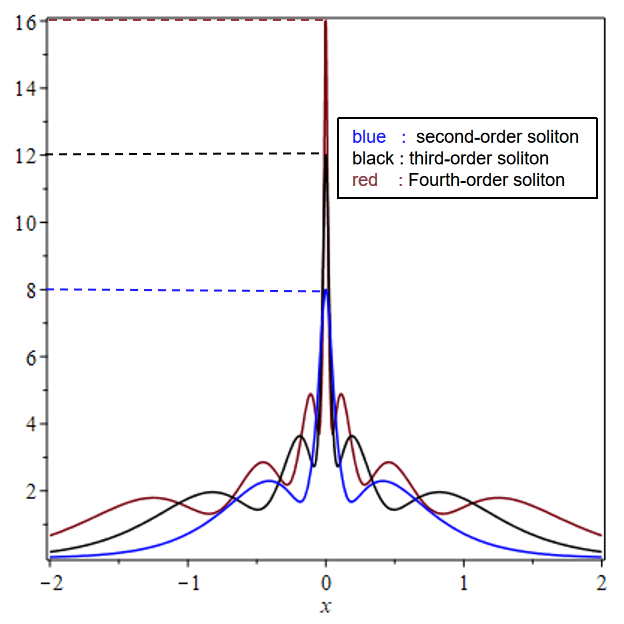}
\end{minipage}}
\caption{(a) Second-order soliton; (b) Third-order soliton; (c) Fourth-order soliton; (d) Cross section view of second-order, third-order and fourth-order solitons.}
\label{gjgz}
\end{figure}

{\bf \subsection{Higher-order soliton on the $n$-periodic background derived by the Semi-degenerate DT}}

 We can construct higher-order soliton on the $n$-periodic background by the semi-degenerate DT. When $\beta_{j}$ have the same sign, the maximum  height of the higher-order soliton on the multi-periodic background is
 $2(N-2n)|\beta_{1}|+2\sum_{j=N-2n+1}^{N}|\beta_{j}|$. As an application, we simply consider the cases of  $n=1$ and $2$.

In the case of $n=1$,  we show two types of solutions for $N=6$ and $8$, respectively. When  $N=6$, elastic collision between a second-order soliton and a one-soliton can be generated from Eq. \eqref{ddtf} by setting  $\lambda_{2}=-\lambda_{1}^{*}=-\alpha_{1}+i\beta_{1}$, $\lambda_{5}=\alpha_{5}+i\beta_{5}$ and $\lambda_{6}=-\lambda_{5}^{*}=-\alpha_{5}+i\beta_{5}$ (see Fig. \ref{g2+1q61}). In particular, when $\beta_{5}^{2}-\alpha_{5}^{2}$=$\beta_{1}^{2}-\alpha_{1}^{2}$, velocity resonance of second-order soliton and one soliton is derived (see Fig. \ref{g2+1q62}). If $\lambda_{5}=i\beta_{5}$ and $\lambda_{6}$=$i\beta_{6}$, the second-order soliton solution on the periodic background  is produced and the dynamic evolution is illustrated in Fig. \ref{2ppb}.
\begin{figure}[ht!]
\centering
\subfigure[]{
\label{g2+1q61}
\begin{minipage}[b]{0.3\textwidth}
\includegraphics[width=3.5cm]{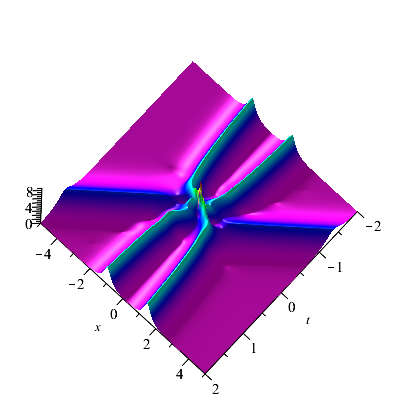}\\
\includegraphics[width=3.5cm]{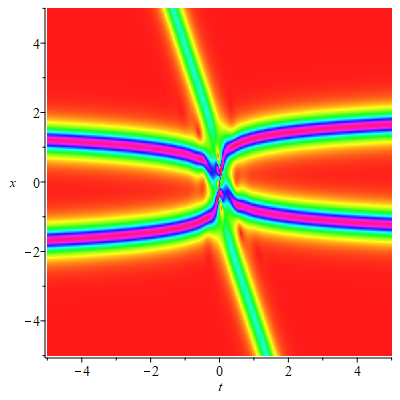}
\end{minipage}}
\subfigure[]{
\label{g2+1q62}
\begin{minipage}[b]{0.3\textwidth}
\includegraphics[width=3.5cm]{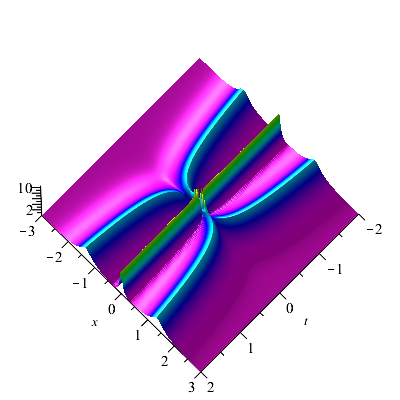}\\
\includegraphics[width=3.5cm]{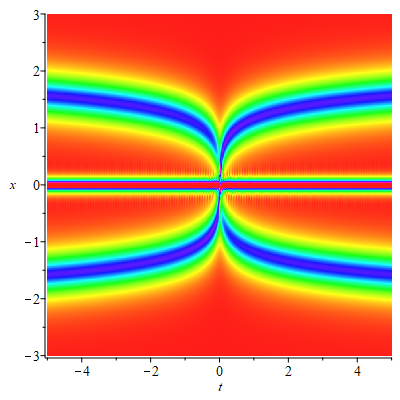}
\end{minipage}}
\subfigure[]{
\label{2ppb}
\begin{minipage}[b]{0.3\textwidth}
\includegraphics[width=3.5cm]{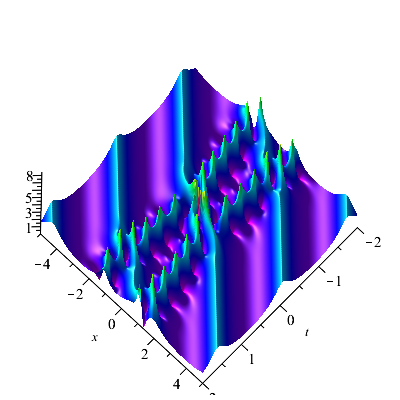}\\
\includegraphics[width=3.5cm]{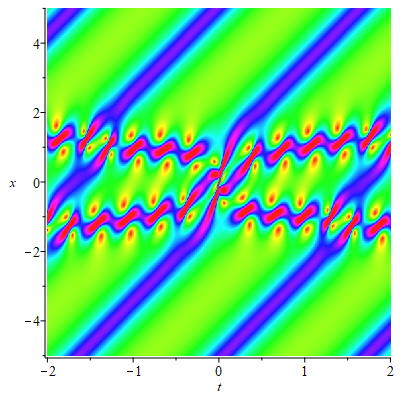}
\end{minipage}}
\caption{ (a) Elastic collision of second-order soliton and one soliton: $\alpha_{1}=\beta_{1}=\alpha_{5}=1$ and $\beta_{5}=0.5$; (b) Velocity resonance of second-order soliton and one soliton: $\alpha_{1}=\beta_{1}=1$ and $\alpha_{5}=\beta_{5}=2$; (c) Second-order soliton solution on the periodic background: $\alpha_{1}=\beta_{1}=\beta_{5}=1$ and $\beta_{6}=0.5$.}
\label{l1}
\end{figure}

When $N=8$, setting $\lambda_{2}=-\lambda_{1}^{*}=-\alpha_{1}+i\beta_{1}$, $\lambda_{8}=-\lambda_{7}^{*}=-\alpha_{7}+i\beta_{7}$, an elastic collision between a third-order soliton and a one-soliton is constructed. In particular, velocity resonance of third-order soliton and one-soliton is generated when $\beta_{7}^{2}-\alpha_{7}^{2}$=$\beta_{1}^{2}-\alpha_{1}^{2}$. A third-order soliton on a periodic background is derived by requiring $\lambda_{2}=-\lambda_{1}^{*}=-\alpha_{1}+i\beta_{1}$, $\lambda_{7}=i\beta_{7}, \lambda_{8}=i\beta_{8}$ and $\beta_{7}\neq\pm\beta_{8}$. The dynamical behaviours of these solutions can be seen in Fig. \ref{888}.
\begin{figure}[ht!]
\centering
\subfigure[]{
\label{g31ec}
\begin{minipage}[b]{0.3\textwidth}
\includegraphics[width=3.5cm]{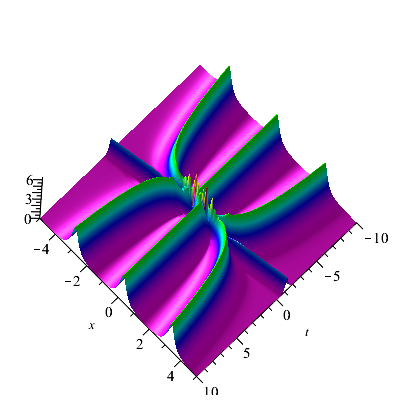}\\
\includegraphics[width=3.5cm]{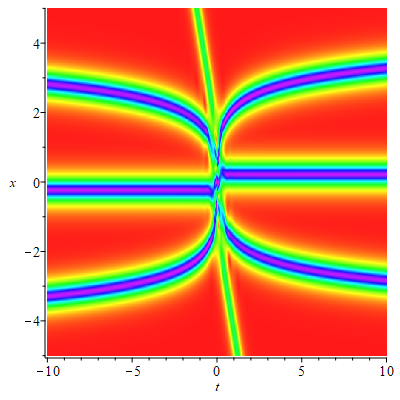}
\end{minipage}}
\subfigure[]{
\label{vc3j+1gz}
\begin{minipage}[b]{0.3\textwidth}
\includegraphics[width=3.5cm]{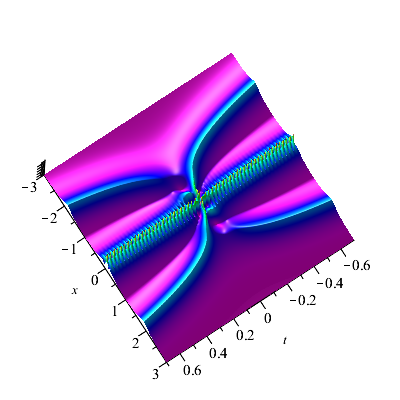}\\
\includegraphics[width=3.5cm]{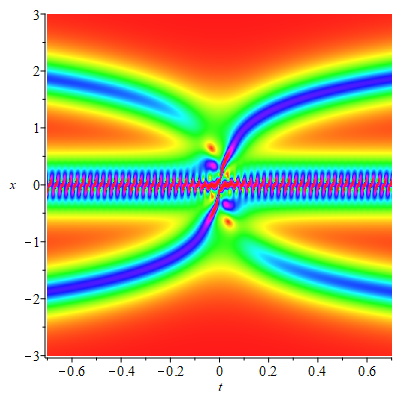}
\end{minipage}}
\subfigure[]{
\label{g3+1p}
\begin{minipage}[b]{0.3\textwidth}
\includegraphics[width=3.5cm]{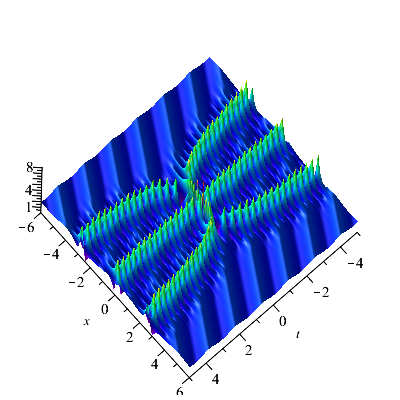}\\
\includegraphics[width=3.5cm]{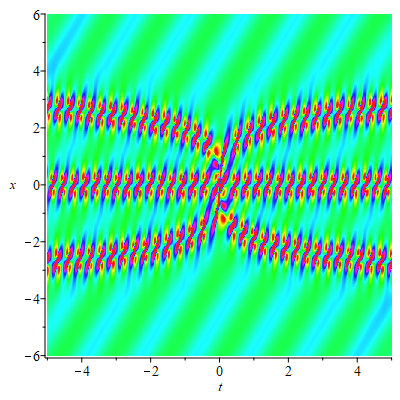}
\end{minipage}}
\caption{(a) Elastic collision of third-order soliton and one-soliton: $\alpha_{1}=\beta_{1}=\alpha_{7}=1$ and $\beta_{7}=0.5$;  (b) Velocity resonance of third-order soliton and one-soliton: $\alpha_{1}=\beta_{1}=1$ and $\alpha_{7}=\beta_{7}=2$; (c) Third-order soliton on the periodic background: $\alpha_{1}=\beta_{1}=\beta_{8}=1$ and $\beta_{7}=0.1$. }
\label{888}
\end{figure}

In the case of $n=2$, many interesting mixed solutions of higher-order soliton, multi-soliton, periodic and double-periodic solutions can be derived for $\lambda_{2}=-\lambda_{1}^{*}=-\alpha_{1}+i\beta_{1}$, $\lambda_{N-2}=-\lambda_{N-3}^{*}=-\alpha_{N-2}+i\beta_{N-2}$ and $\lambda_{N}=-\lambda_{N-1}^{*}=-\alpha_{N-1}+i\beta_{N-1}$.
 Let us take $N=8$ to show the elastic collision of a second-order soliton and a two-soliton solution, and the dynamic evolution diagram is given in Fig. \ref{2j+1+1}. Specially, an elastic collision of a second-order soliton and a velocity resonance two-soliton is obtained when $\beta_{5}^{2}-\alpha_{5}^{2}$=$\beta_{7}^{2}-\alpha_{7}^{2}$ (see Fig. \ref{2j+2gt}), velocity resonance of a second-order soliton and two-soliton is formed when $\beta_{1}^{2}-\alpha_{1}^{2}$=$\beta_{5}^{2}-\alpha_{5}^{2}$=$\beta_{7}^{2}-\alpha_{7}^{2}$ (see Fig. \ref{vc2j+2g}).
\begin{figure}[ht!]
\centering
\subfigure[]{
\label{2j+1+1}
\begin{minipage}[b]{0.3\textwidth}
\includegraphics[width=3.4cm]{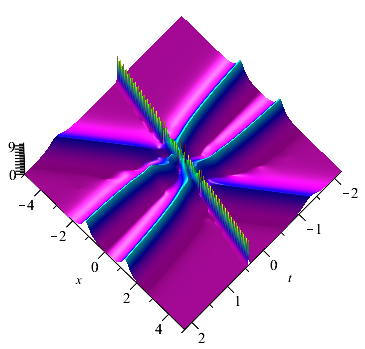}\\
\includegraphics[width=3cm]{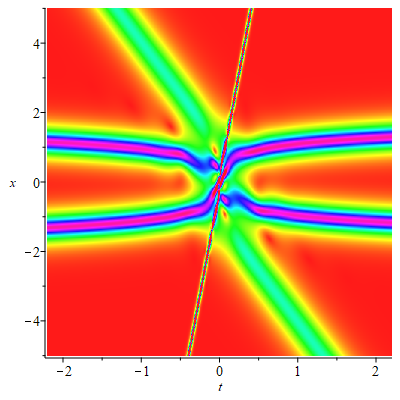}
\end{minipage}}
\subfigure[]{
\label{2j+2gt}
\begin{minipage}[b]{0.3\textwidth}
\includegraphics[width=4cm]{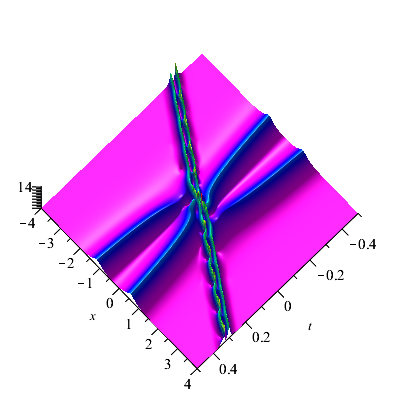}\\
\includegraphics[width=3cm]{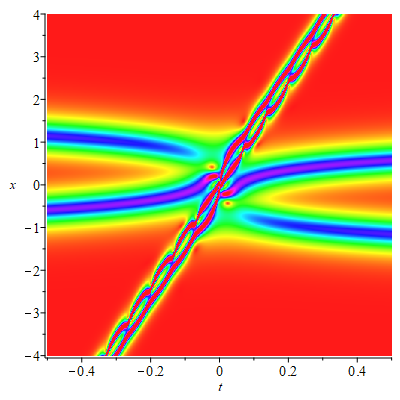}
\end{minipage}}
\subfigure[]{
\label{vc2j+2g}
\begin{minipage}[b]{0.3\textwidth}
\includegraphics[width=4cm]{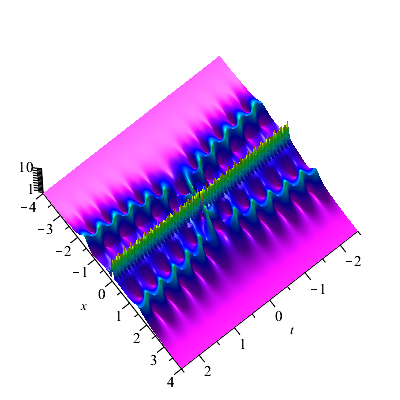}\\
\includegraphics[width=3cm]{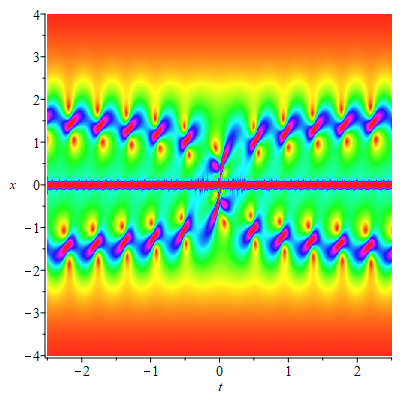}
\end{minipage}}
\caption{(a) Elastic collision of second-order soliton and two-soliton: $\alpha_{1}=\beta_{1}=\alpha_{5}=\alpha_{7}=1$, $\beta_{5}=0.5$ and $\beta_{7}=2$; (b) Elastic collision of second-order soliton and velocity resonance two-soliton: $\alpha_{1}=\beta_{1}=\alpha_{5}=1$, $\beta_{5}=2$, $\alpha_{7}=\sqrt{2}$ and $\beta_{7}=\sqrt{5}$;
(c) Velocity resonance of second-order soliton and two-soliton: $\alpha_{1}=\beta_{1}=1$, $\alpha_{5}=\beta_{5}=0.5$ and $\alpha_{7}=\beta_{7}=2$.}
\end{figure}

In addition, elastic collisions between a second-order soliton and a one-soliton on the periodic can be viewed when $\lambda_{2}=-\lambda_{1}^{*}=-\alpha_{1}+i\beta_{1}$, $\lambda_{6}=-\lambda_{5}^{*}=-\alpha_{5}+i\beta_{5}$,  $\lambda_{7}=i\beta_{7}$ and $\lambda_{8}= i\beta_{8}$.  In particular, when $\beta_{5}^{2}-\alpha_{5}^{2}$=$\beta_{1}^{2}-\alpha_{1}^{2}$, the velocity resonance of a second-order soliton and a one-soliton on the periodic background can be observed. When $\lambda_{2}=-\lambda_{1}^{*}=-\alpha_{1}+i\beta_{1}$ and $\lambda_{k}=i\beta_{k}$ are pure imaginary numbers, we can construct the second-order soliton on the double-periodic background for $k=5$, $6$, $7$ and $8$. The dynamical evolution diagrams of these higher-order solitons on the periodic and double-periodic backgrounds are plotted in Fig. \ref{2j+2pzt}, which obviously reveals that the structures of solitons are greatly influenced by  their propagating directions and positions on the periodic background.

\begin{figure}[ht!]
\centering
\subfigure[]{
\label{2j+1s+p}
\begin{minipage}[b]{0.3\textwidth}
\includegraphics[width=4cm]{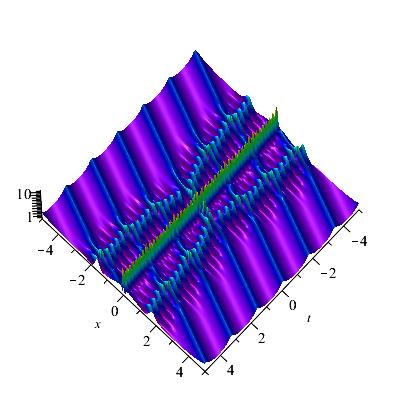}\\
\includegraphics[width=3cm]{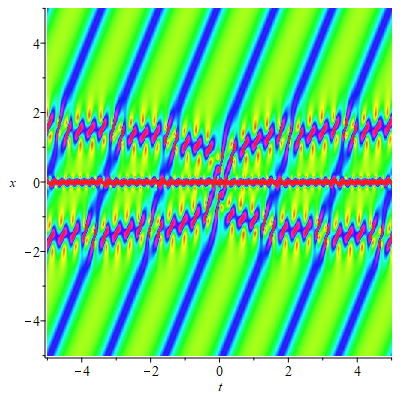}
\end{minipage}}
\subfigure[]{
\label{2j+1t+p}
\begin{minipage}[b]{0.3\textwidth}
\includegraphics[width=4cm]{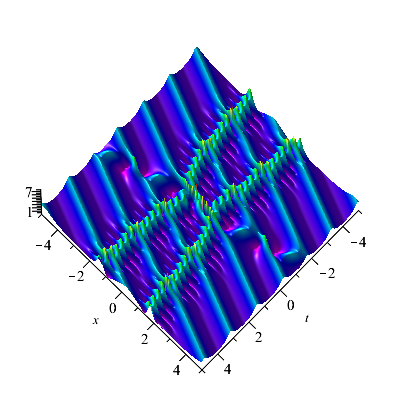}\\
\includegraphics[width=3cm]{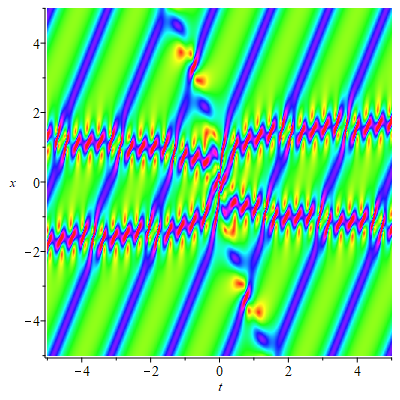}
\end{minipage}}
\subfigure[]{
\label{2j+2p}
\begin{minipage}[b]{0.3\textwidth}
\includegraphics[width=3.2cm]{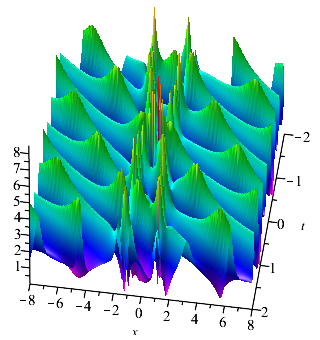}\\
\includegraphics[width=3cm]{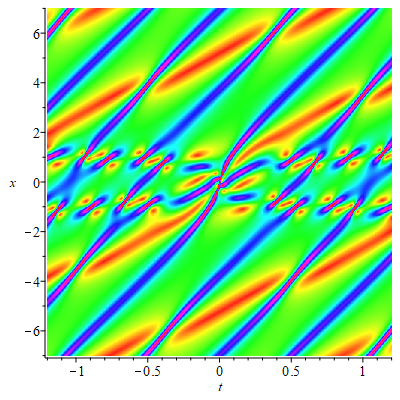}
\end{minipage}}
\caption{(a) Elastic collision of second-order soliton and one soliton on the periodic: $\alpha_{1}=\beta_{1}=\alpha_{5}=\beta_{7}=1$ and $\beta_{5}=\beta_{8}=0.5$;
(b) Velocity resonance of second-order soliton and one-soliton on the periodic background: $\alpha_{1}=\beta_{1}=\beta_{7}=1$, $\alpha_{5}=\beta_{5}=2$ and $\beta_{8}=0.5$;
(c) Second-order soliton on the double-periodic background:  $\alpha_{1}=\beta_{1}=\beta_{7}=1$, $\beta_{5}=0.1$, $\beta_{6}=\sqrt{2}$ and $\beta_{8}=0.5$.}
\label{2j+2pzt}
\end{figure}


{\bf \section{Higher-order hybrid-pattern soliton without and with $n$-periodic background}}

In the previous section, we have investigated the dynamics of the higher-order soliton on the periodic and double periodic backgrounds. It is shown that even though the higher-order soliton is moving on different trajectories, it shares the nearly equal velocity and amplitude. Thanks to the higher-order hybrid-pattern solitons describing the interaction between several types of solitons, they have more abundant and interesting wave structures than higher-order solitons. In this section, we are concentrate on the higher-order hybrid-pattern soliton without and with the periodic backgrounds by the generalized degenerate and semi-degenerate DT.

{\bf \subsection{Generalized degenerate and semi-degenerate DT}}
In order to construct higher-order hybrid-pattern soliton without and with $n$-periodic backgrounds, we  first provide the generalized degenerate DT and generalized semi-degenerate DT formula.
\begin{theorem}{\bf (Generalized degenerate and semi-degenerate DT):}
Letting $$ \lambda_{j} \rightarrow \left\{
                                  \begin{array}{ll}
                                    \lambda_{1}, & \hbox{$2n_{0}<j=odd\leq 2n_{0}+2n_{1}-1$} \\
                                    \lambda_{2}, & \hbox{$2n_{0}<j=even\leq 2n_{0}+2n_{1}$} \\
                                      \lambda_{3}, & \hbox{$2n_{0}+2n_{1} <j=odd\leq2n_{0}+2n_{1}+2n_{2}-1$} \\
                                    \lambda_{4}, & \hbox{$2n_{0}+2n_{1} <j=even\leq2n_{0}+2n_{1}+2n_{2}$} \\
                                   ... \\
                                      \lambda_{2n_{0}-1}, & \hbox{$2n_{0}+2n_{1}+...2n_{\hbar-1} <j=odd\leq2n_{0}+2n_{1}+...2n_{\hbar-1}+2n_{\hbar}-1$} \\
                                    \lambda_{2n_{0}}, & \hbox{$ 2n_{0}+2n_{1}+...2n_{\hbar-1} <j=even\leq2n_{0}+2n_{1}+...2n_{\hbar-1}+2n_{\hbar}$}
                                  \end{array}
                                \right.$$ 
                                and
 $$2\sum_{\ell=0}^{\hbar}n_{\ell}=N-2n,$$
the generalized degenerate DT ($n=0$) and generalized semi-degenerate DT ($n\neq0$) can be derived from the $N$-fold DT formula \eqref{dtf} by the Taylor expansion and determinant calculations. The specific form of the new solution $q_{N}$ is 
\begin{equation}\label{gddtf}
q_{N}=\frac{\left|M'\right|^{2}}{|P'|^{2}}q+2i\frac{|M'||H^{'}|}{|P^{'}|^{2}},
\end{equation}
where
\begin{equation}
\begin{split}
&M'=\left\{
         \begin{array}{ll}
           M'_{jk}, & 1\leq j,k \leq N-2n \\
         M_{jk}, & N-2n<j,k \leq N
         \end{array}
       \right.,\\
&H'=\left\{
         \begin{array}{ll}
           H'_{jk}, & 1\leq j,k \leq N-2n \\
        H_{jk}, & N-2n <j,k \leq N
         \end{array}
       \right.,\\
&P'=\left\{
         \begin{array}{ll}
           P'_{jk}, & 1\leq j,k \leq N-2n \\
         P_{jk}, & N-2n< j,k \leq N
         \end{array}
       \right.,\\
\end{split}
\end{equation}
$$M_{jk}^{\prime}=\lim_{\epsilon \rightarrow 0}\frac{\partial^{n_{j}} M'_{jk}\left(\lambda_{j}+\epsilon\right)}{n_{j}! \partial \epsilon^{n_{j}-1}},$$

$$H_{jk}^{\prime}=\lim_{\epsilon \rightarrow 0}\frac{\partial^{n_{j}} H'_{jk}\left(\lambda_{j}+\epsilon\right)}{n_{j}! \partial \epsilon^{n_{j}-1}},$$

$$P_{jk}^{\prime}=\lim_{\epsilon \rightarrow 0}\frac{\partial^{n_{j}} P'_{jk}\left(\lambda_{j}+\epsilon\right)}{n_{j}! \partial \epsilon^{n_{j}-1}},$$

$$
n_{j}=\left\{ \begin{array}{ll}
           0,& \hbox{$j\leq 2n_{0}$} \\
          \left[\frac{j-2n_{0}+1}{2}\right], & \hbox{$2n_{0} < j \leq 2n_{0}+2n_{1}$} \\
            \left[\frac{j-2n_{0}-2n_{1}+1}{2}\right], & \hbox{$2n_{0}+2n_{1} < j \leq 2n_{0}+2n_{1}+2n_{2}$} \\
           ...\\
 \left[\frac{j-2n_{0}-2n_{1}-...-2n_{\hbar-1}+1}{2}\right], & \hbox{$2n_{0}+2n_{1}+...+2n_{\hbar-1} < j \leq 2n_{0}+2n_{1}+...+2n_{\hbar}$}
          \end{array}
\right..
$$
\end{theorem}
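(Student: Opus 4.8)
The plan is to realize the generalized confluent limit as a nested sequence of Taylor expansions, mirroring the single-cluster argument of Theorem~2 but now carried out independently on each coalescing block. I start from the non-degenerate $N$-fold formula \eqref{dtf} with the zero seed and regard every entry of $M$, $H$ and $P$ as an entire function of its spectral parameter $\lambda_{j}$; this analyticity, evident from the explicit eigenfunctions \eqref{sef} (exponentials of polynomials in $\lambda_{j}$), is exactly what makes the required derivatives and Taylor remainders well defined.

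First I would treat the blocks one at a time, in the order $\ell=1,2,\dots,\hbar$. For the $\ell$-th block, whose $2n_{\ell}$ indices collapse onto the base pair $(\lambda_{2\ell-1},\lambda_{2\ell})$, I put $\lambda=\lambda_{2\ell-1}+\varepsilon$ on its odd rows and $\lambda=\lambda_{2\ell}+\varepsilon$ on its even rows, then Taylor-expand these rows in $\varepsilon$ to the needed order. As in Theorem~2, I subtract the lower rows of the block from the higher ones within each parity class, which annihilates the lower-order terms and leaves every row carrying a single leading monomial $\varepsilon^{\,n_{j}-1}$; factoring that monomial out of the row converts the finite-difference stencil into a genuine $\lambda$-derivative as $\varepsilon\to0$, producing the entries $M'_{jk},H'_{jk},P'_{jk}$ with $n_{j}=\big[\tfrac{j-2n_{0}-\dots-2n_{\ell-1}+1}{2}\big]$. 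The leading $2n_{0}$ indices and the trailing $2n$ indices carrying the periodic background are left untouched ($n_{j}=0$) and contribute no power of $\varepsilon$.

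The step I expect to be the crux is the bookkeeping of the $\varepsilon$-powers across numerator and denominator. Because $M$, $P$ and $H$ share the same row-coalescence pattern---they differ only through the $\phi\leftrightarrow\varphi$ interchange and, for $H$, through one replaced column---each block factors the same total power of $\varepsilon$ out of $|M|$, $|P|$ and $|H|$, so in the ratio $|M'|\,|H'|/|P'|^{2}$ of \eqref{gddtf} these powers cancel exactly and the iterated limits $\varepsilon\to0$ stay finite and nonzero. I would verify this matching block by block and confirm that only elementary row operations (which preserve each determinant) are used, so that the Cramer's-rule derivation behind \eqref{dtf} survives the limit. Carrying out the limits over all $\hbar$ blocks then yields the stated $q_{N}$; specializing $n=0$ gives the generalized degenerate DT and $n\neq0$ the generalized semi-degenerate DT.
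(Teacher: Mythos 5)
Your proposal is correct and follows essentially the same route as the paper: the paper proves Theorem 2 by Taylor-expanding the coalescing rows in $\varepsilon$, subtracting lower rows within each parity class, extracting the leading powers of $\varepsilon$, and taking the limit, and then simply remarks that Theorem 3 is proved "similarly" --- your block-by-block version is exactly that adaptation. Your explicit power-counting argument showing that $|M'|$, $|H'|$ and $|P'|$ shed identical powers of $\varepsilon$, so the ratio in \eqref{gddtf} survives the limit, is a detail the paper leaves implicit in its phrase "determinant calculations," and it is a worthwhile addition rather than a deviation.
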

It is remarkable that the proof of this theorem is similar to the proof of the degenerate DT and semi-degenerate DT.

{\bf \subsection{Higher-order hybrid-pattern soliton from the generalized degenerate DT}}

The higher-order hybrid-pattern solitons can be constructed by using the generalized degenerate DT. For instance, an elastic collision of two second-order solitons can be derived if $\lambda_{2}=-\lambda_{1}^{*}=-\alpha_{1}+i\beta_{1}$, $\lambda_{4}=-\lambda_{3}^{*}=-\alpha_{3}+i\beta_{3}$ and $N=8$. It is shown in  Fig. \ref{n422nvr}  that the two second-order solitons are moving along two different trajectories with different velocities and amplitudes. Furthermore, they will become two velocity resonance second-order solitons if $\beta_{3}^{2}-\alpha_{3}^{2}$=$\beta_{1}^{2}-\alpha_{1}^{2}$. Fig. \ref{n422vr} and Fig. \ref{3n422} display two different elastic collisions of two velocity resonance second-order solitons depending on the choices of the parameters.
\begin{figure}[ht!]
\centering
\subfigure[]{
\label{n422nvr}
\begin{minipage}[b]{0.3\textwidth}
\includegraphics[width=3.4cm]{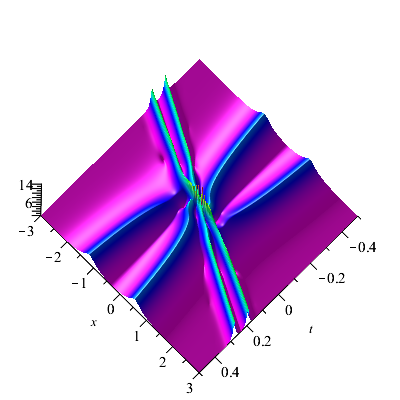}\\
\includegraphics[width=3cm]{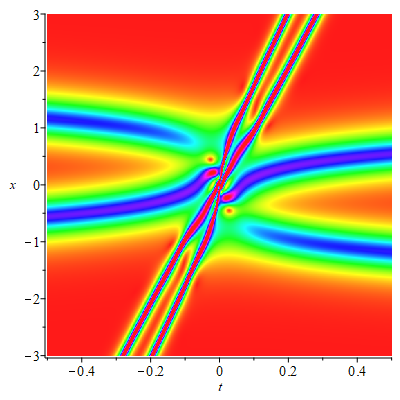}
\end{minipage}}
\subfigure[]{
\label{n422vr}
\begin{minipage}[b]{0.3\textwidth}
\includegraphics[width=3.4cm]{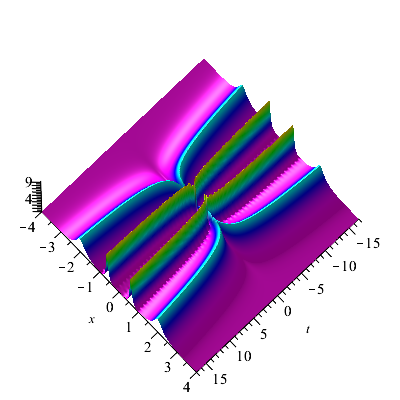}\\
\includegraphics[width=3cm]{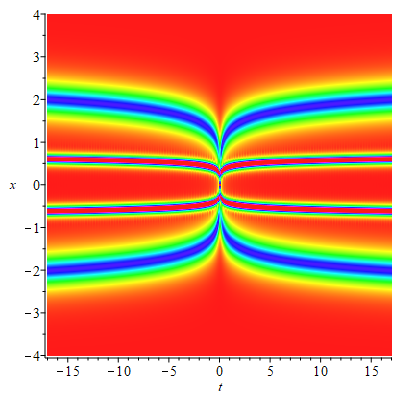}
\end{minipage}}
\subfigure[]{
\label{3n422}
\begin{minipage}[b]{0.3\textwidth}
\includegraphics[width=3.4cm]{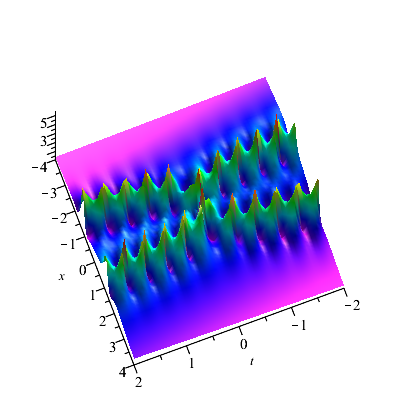}\\
\includegraphics[width=3cm]{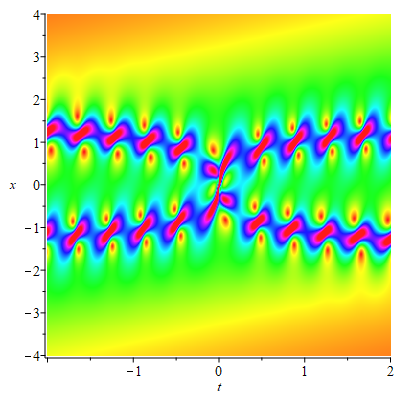}
\end{minipage}}
\caption{(a) Elastic collision of two second-order solitons:
 $\alpha_{1}=\beta_{1}=\alpha_{3}=1$ and $\beta_{3}=2$; (b) Elastic collision of two  velocity resonance second-order solitons:  $\alpha_{1}=\beta_{1}=1$ and $\alpha_{3}=\beta_{3}=2$;  (c) Elastic collision of two second-order velocity resonance solitons:  $\alpha_{1}=\beta_{1}=0.1$ and $\alpha_{3}=\beta_{3}=0.2$. }
\end{figure}

{\bf \subsection{Higher-order hybrid-pattern soliton on the $n$-periodic background from the generalized semi-degenerate DT}}

By means of the generalized semi-degenerate DT, the higher-order soliton on the $n$-periodic background can be obtained. As an application, here we just give the results for the case of $n=1$. Fixing $N=10$, the  elastic collision of two second-order solitons on the periodic background is constructed for $\lambda_{2}=-\lambda_{1}^{*}=-\alpha_{1}+i\beta_{1}$,  $\lambda_{4}=-\lambda_{3}^{*}=-\alpha_{3}+i\beta_{3}$, $\lambda_{9}=i\beta_{9}$ and $\lambda_{10}=i\beta_{10}$, as exhibited in Fig. \ref{q10pb}). In particular, when $\beta_{3}^{2}-\alpha_{3}^{2}$=$\beta_{1}^{2}-\alpha_{1}^{2}$, velocity resonance of two second-order solitons on the periodic background can be formed. As can be seen from the dynamic evolution diagram in Fig. \ref{by1}, higher-order solitons with different velocities and directions show completely different profiles under the influence of the periodic background, and the periodic background will gradually weaken, which is consistent with the wave phenomenon in nature.  Fig. \ref{01010202t8x16} is the local structure  of Fig. \ref{by1}. Fig. \ref{1122} describes a different velocity resonance of two second-order solitons on the periodic background. It is observed that the propagation patterns of solitons change greatly with the different periods of the periodic backgrounds, deciding on the choices of the parameters.

\begin{figure}[ht!]
\centering
\subfigure[]{
\label{q10pb}
\begin{minipage}[b]{0.2\textwidth}
\includegraphics[width=3cm]{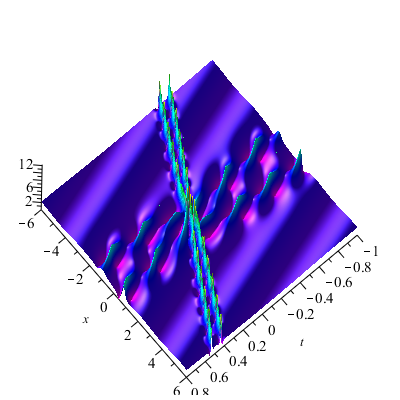}\\
\includegraphics[width=2.4cm]{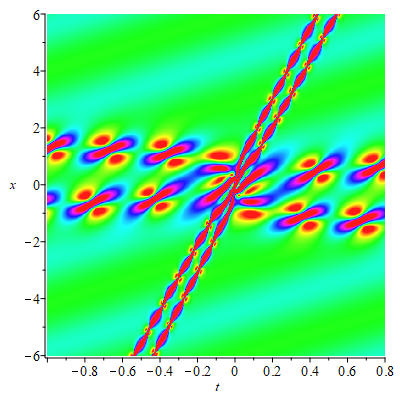}
\end{minipage}}
\subfigure[]{
\label{01010202t8x16}
\begin{minipage}[b]{0.2\textwidth}
\includegraphics[width=3cm]{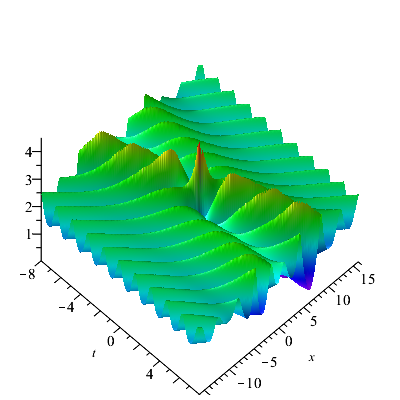}\\
\includegraphics[width=2.4cm]{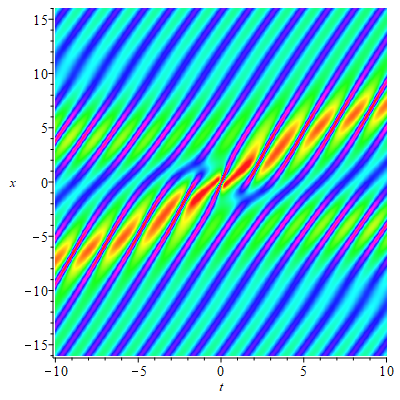}
\end{minipage}}
\subfigure[]{
\label{by1}
\begin{minipage}[b]{0.2\textwidth}
\includegraphics[width=3cm]{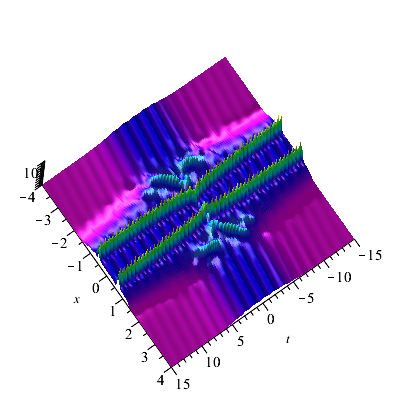}\\
\includegraphics[width=2.4cm]{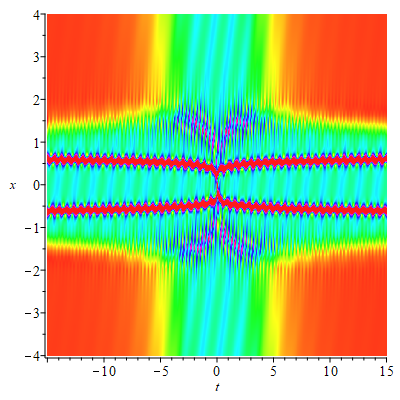}
\end{minipage}}
\subfigure[]{
\label{1122}
\begin{minipage}[b]{0.2\textwidth}
\includegraphics[width=3cm]{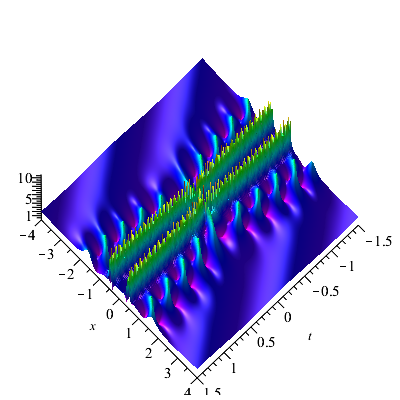}\\
\includegraphics[width=2.4cm]{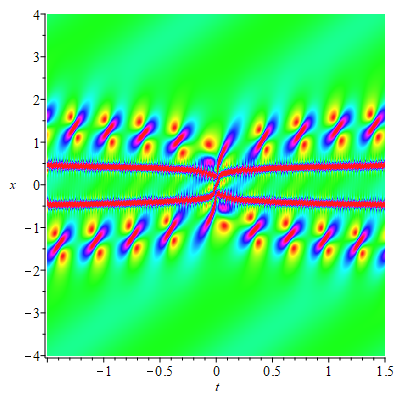}
\end{minipage}}
\caption{(a) Elastic collision of two second-order solitons on the periodic background:
 $\alpha_{1}=\beta_{1}=\alpha_{3}=\beta_{10}=1$, $\beta_{3}=2$ and $\beta_{9}= 0.1$;
(b) Velocity resonance of two second-order solitons on the periodic background:
 $\alpha_{1}=\beta_{1}=\beta_{9}=0.1$, $\alpha_{3}=\beta_{3}=0.2$ and $\beta_{10}=1$;
 (c) and (d) Velocity resonance of two second-order solitons on the periodic background:
  $\alpha_{1}=\beta_{1}=\beta_{10}=1$, $\alpha_{3}=\beta_{3}=2$ and $\beta_{9}= 0.1$.}
\label{dtx}
\end{figure}

{\bf  \section{ Extension to the reverse-space-time DNLS equation} }

In recent years, nonlocal nonlinear equations have been studied a lot from different viewpoints and perspectives \cite{lm-2015-pre,ablowitz-tmp-2018,wmm-nd-2021}, owing to their physical applications in Bose-Einstein condensate \cite{ptbose-pra-2012}, unconventional systems of magnetics \cite{ta-pra-2016}, and so on \cite{LRC-pea-2011,jky-pre-2018}. For the DNLS equation, three types nonlocal extensions, namely, the $PT$-symmetric, the reverse-time and the reverse-space-time  DNLS equations, have been realized by Ablowitz and Musslimani \cite{ab-prl-2013,ab-studies-2017}. The reverse-space-time DNLS equation 
\begin{equation}\label{ndnls}
 iq_{t}-q_{x x}+i(q^{2}q(-x,-t))_{x}=0
\end{equation}
is a reduction of the KN system \eqref{kns} under $r(x,t)=-q(-x,-t)$. The DT for \eqref{ndnls} can also be reduced from that of the KN system.
 
Interestingly, the solutions of Eq. \eqref{ndnls} derived by the $N$-fold DT formula can be reduced to those of the DNLS equation \eqref{dnls}. Therefore, we can extend the results of the DNLS equation to the reverse-space-time DNLS.  In addition, we also present its modulational instability (MI) analysis.

{\bf \subsection{Symmetry condition of eigenfunctions}}

The eigenfunctions of the reverse-space-time DNLS equation have the relation $\phi(x, t; \lambda_{j}) =\varphi(-x, -t; \lambda_{j})$ for every $\lambda_{j}$, which has quite different symmetry property compared with the DNLS equation and the detailed proof is presented in the lemma  below.
 
\begin{lemma}\label{linearconstraint}

{\bf (Symmetry relation of the eigenfunctions for the reverse-space-time DNLS equation):}
For Eq. \eqref{ndnls}, the eigenfunctions admit the symmetry condition
\begin{equation}\label{sc}
\phi(x, t; \lambda_{j}) =\varphi(-x, -t; \lambda_{j}).
\end{equation}
\end{lemma}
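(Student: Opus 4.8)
The plan is to recast the asserted scalar identity as a discrete symmetry of the vector eigenfunction and then to establish that symmetry from the structure of the KN Lax pair under the involution $(x,t)\mapsto(-x,-t)$ together with the reverse-space-time reduction $r(x,t)=-q(-x,-t)$. First I would introduce the swap matrix $J=\left(\begin{smallmatrix}0&1\\1&0\end{smallmatrix}\right)$ and note that, since \eqref{sc} is required for all $(x,t)$, replacing $(x,t)$ by $(-x,-t)$ turns it into $\phi(-x,-t;\lambda_j)=\varphi(x,t;\lambda_j)$; the two scalar relations together are precisely the vector identity $\Psi(x,t;\lambda_j)=J\,\Psi(-x,-t;\lambda_j)$. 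It therefore suffices to prove that $\Phi(x,t):=J\,\Psi(-x,-t;\lambda_j)$ satisfies the very same Lax pair \eqref{xlax}--\eqref{tlax}, with the same spectral parameter, and carries the normalization fixed by \eqref{sef}.

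The key algebraic input is how the potential transforms. Using the reduction one checks $Q(-x,-t)=-J\,Q(x,t)\,J$, and together with $J\sigma J=-\sigma$ and $J^{2}=I$ a short computation gives $J\,U(-x,-t;\lambda)\,J=-U(x,t;\lambda)$. Differentiating $\Phi$ in $x$ and applying the chain rule, the inner factor $-1$ coming from $(x,t)\mapsto(-x,-t)$ cancels the sign produced by conjugating with $J$, so that $\Phi_{x}=U(x,t;\lambda)\Phi$ holds exactly. I would then repeat the computation for the time part, expanding $V=2i\sigma\lambda^{4}+2Q\lambda^{3}+iQ^{2}\lambda^{2}+(Q^{3}+iQ_{x}\sigma)\lambda$ and conjugating term by term: the coefficients of $\lambda^{4}$, $\lambda^{3}$ and $\lambda$ change sign under the combined flip-and-conjugation, exactly as in the spatial part.

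Here lies the only delicate point, which I expect to be the main obstacle. The coefficient of $\lambda^{2}$, namely $iQ^{2}=iqr\,I$, is a scalar matrix that is left invariant by both the argument flip and the conjugation by $J$; consequently the time part yields $\Phi_{t}=\bigl(V-2iQ^{2}\lambda^{2}\bigr)\Phi$ rather than $V\Phi$, and the naive symmetry fails for a generic potential. The resolution is that the present work operates entirely with the zero seed $q=r=0$, so $Q\equiv0$, the obstructive term $iQ^{2}\lambda^{2}$ disappears, $V=2i\sigma\lambda^{4}$, and $J\,V(-x,-t;\lambda)\,J=-V(x,t;\lambda)$ identically; hence $\Phi$ does solve the full Lax pair. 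I would close either by uniqueness of the Lax-pair solution with the normalization of \eqref{sef}, or, more transparently, by direct substitution: writing $\theta=\lambda_j^{2}x+2\lambda_j^{4}t$ so that $\phi=e^{i\theta}$ and $\varphi=e^{-i\theta}$ from \eqref{sef}, one has $\varphi(-x,-t;\lambda_j)=e^{-i\theta(-x,-t)}=e^{i\theta(x,t)}=\phi(x,t;\lambda_j)$, which is \eqref{sc}. I would state explicitly that this identity rests on the zero-seed assumption used throughout, precisely so that the $iQ^{2}\lambda^{2}$ term never intervenes.
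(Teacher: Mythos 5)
Your core mechanism --- conjugation by the swap matrix $J$ combined with the flip $(x,t)\mapsto(-x,-t)$ --- is exactly the paper's argument: the paper performs the same operation componentwise, exchanging the two rows of the flipped system, and your treatment of the $x$-part \eqref{xlax} agrees with it verbatim. The problem is your ``delicate point.'' The obstruction you locate in the $\lambda^{2}$ term is an artifact of a typo in the paper's Lax pair, not a genuine failure of the symmetry. As printed, $V_{2}=iQ^{2}$ cannot be correct: since $Q^{2}=qrI$ is a scalar matrix, $[Q,V_{2}]=0$, and the zero-curvature condition $U_{t}-V_{x}+[U,V]=0$ then fails at order $\lambda^{3}$ (after the $Q_{x}$ contributions cancel, the nonzero term $[i\sigma,Q^{3}]$ has nothing left to cancel against), so \eqref{xlax}--\eqref{tlax} with a scalar $V_{2}$ is not a Lax pair for \eqref{kns} at all. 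The correct coefficient is $V_{2}=iQ^{2}\sigma=iqr\sigma$. With this $V_{2}$ your own calculus goes through for \emph{every} potential obeying the reduction: $J\sigma J=-\sigma$ supplies the sign flip from conjugation, and the reduction $r(x,t)=-q(-x,-t)$ gives $(qr)(-x,-t)=-q(-x,-t)\,q(x,t)=(qr)(x,t)$, whence $J\,V_{2}(-x,-t)\,J=-V_{2}(x,t)$, exactly as for the $\lambda^{4}$, $\lambda^{3}$ and $\lambda$ terms. So there is no obstruction, no need to retreat to the zero seed, and the symmetry holds for arbitrary eigenfunctions of the reduced Lax pair --- which is what the lemma asserts and what the paper's proof (modulo the typo) establishes. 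Your zero-seed-only version proves strictly less than the stated lemma.

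On the positive side, your closing step --- fixing the solution by the normalization of \eqref{sef}, or verifying \eqref{sc} directly on the explicit eigenfunctions --- addresses a real looseness that the paper leaves open: showing that $\Psi(x,t;\lambda_{j})$ and $J\Psi(-x,-t;\lambda_{j})$ satisfy the same first-order linear system does not by itself force them to be equal, since the solution space for fixed $\lambda_{j}$ is two-dimensional; one must either match data at a point or check the identity on the specific eigenfunctions actually used. Keep that step, but delete the zero-seed caveat once $iQ^{2}$ is replaced by $iQ^{2}\sigma$: the general statement then follows from your conjugation identity together with a normalization argument.
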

\begin{proof}  From the $x$ part of the Lax pair \eqref{xlax}, one has
$$
\left(\begin{array}{l}
\phi(x, t; \lambda_{j})\\
\varphi(x, t; \lambda_{j})
\end{array}\right)_{x}=\left(\begin{array}{cc}
i\lambda_{j}^{2} & q(x, t)\lambda_{j} \\
-q(-x, -t)\lambda_{j} & -i\lambda_{j}^{2}
\end{array}\right)\left(\begin{array}{l}
\phi(x, t; \lambda_{j}) \\
\varphi(x, t; \lambda_{j})
\end{array}\right).
$$
Let $x\rightarrow -x$, $t\rightarrow -t$,  then
$$
\left(\begin{array}{l}
\varphi(-x, -t; \lambda_{j})\\
\phi(-x, -t; \lambda_{j})
\end{array}\right)_{x}=\left(\begin{array}{cc}
i\lambda_{j}^{2} & q(x, t)\lambda_{j} \\
-q(-x, -t)\lambda_{j} & -i\lambda_{j}^{2}
\end{array}\right)\left(\begin{array}{l}
\varphi(-x, -t; \lambda_{j}) \\
\phi(-x, -t; \lambda_{j})
\end{array}\right).
$$
So
$$
\left(\begin{array}{l}
\phi(x, t; \lambda_{j})\\
\varphi(x, t; \lambda_{j})
\end{array}\right)  \text{and} \left(\begin{array}{l}
\varphi(-x, -t; \lambda_{j})\\
\phi(-x, -t; \lambda_{j})
\end{array}\right)$$
satisfy the same spectral problem. Taking a similar procedure, the symmetry property also holds for the $t$ part of the Lax pair. That means the symmetry relation of the eigenfunctions for the reverse-space-time DNLS equation is $\phi(x, t; \lambda_{j}) =\varphi(-x, -t; \lambda_{j})$ for every $\lambda_{j}$.
\end{proof}

First, we take the simplest case $\lambda_{1}=\alpha_{1}+i\beta_{1}$. Then the solution $q[1]$ for the reverse-space-time DNLS equation is 
\begin{equation}
\begin{split}
q[1]&=(2i\alpha_{1}-2\beta_{1}) e^{iA+B},\\
A&=-2(2\alpha_{1}^{4}t-12\alpha_{1}^{2}\beta_{1}^{2}t+2\beta_{1}^{4}t+\alpha_{1}^{2}x-\beta_{1}^{2}x),\\
B&=2(8\alpha_{1}^{3}\beta_{1}t-8\alpha_{1}\beta_{1}^{3}t+2\alpha_{1}\beta_{1}x).
\end{split}
\end{equation}
It is noted that when $\alpha_{1}\beta_{1}=0$, we have $|q[1]|^{2}=4(\alpha_{1}^{2}+\beta_{1}^{2})$, so it is a plane wave with the height of $2\sqrt{\alpha_{1}^{2}+\beta_{1}^{2}}$. When $\alpha_{1}\beta_{1}\neq0$, we have $|q[1]|^{2}=4(\alpha_{1}^{2}+\beta_{1}^{2})e^{8\alpha_{1}\beta_{1}(4\alpha_{1}^{2}t-4\beta_{1}^{2}t+x)}$, and hence, it is an exponentially growing curved surface wave.\\

Now, let us take $\lambda_{1}=\alpha_{1}+i\beta_{1}$, and $\lambda_{2}=\alpha_{2}+i\beta_{2}$. Then the explicit expression of $q[2]$ is 
\begin{equation} \label{0q2}
q[2]={\frac {2i\zeta_{1}e^{iA_{1}}B
[\zeta_{1}B\sin{A_{4}}\sin{A_{3}}+\zeta_{2}\sin(A_{3}-A_{4})
+\zeta_{1}\cos{A_{4}}\cos{A_{3}}]}{[\zeta_{2}\sin{(A_{3}-A_{4})}-\zeta_{1}\cos{ (A_{3}+ A_{4})}]^{2}}},
\end{equation}
where
\begin{equation}
\begin{split}
&B=\cosh{ A_{2}}-\sinh{A_{2}},\\
&A_{1}=(2\alpha_{1}^{4}-12\alpha_{1}^{2}\beta_{1}^{2}+2\alpha_{2}^{4}-12\alpha_{2}^{2}\beta_{2}^{2}+2\beta_{1}^{4}+2\beta_{2}^{4})t
-(-\alpha_{1}^{2}-\alpha_{2}^{2}+\beta_{1}^{2}+\beta_{2}^{2})x,\\
&A_{2}=(8\alpha_{1}^{3}\beta_{1}-8\alpha_{1}\beta_{1}^{3}+8\alpha_{2}^{3}\beta_{2}-8\alpha_{2}\beta_{2}^{3})t
+2x(\alpha_{1}\beta_{1}+\alpha_{2}\beta_{2}),\\
&A_{3}=2\alpha_{2}^{4}t+8i\alpha_{2}^{3}\beta_{2}t+(-12\beta_{2}^{2}t+x)\alpha_{2}^{2}-8i(\beta_{2}^{2}t-\frac{1}{4}x)\beta_{2}\alpha_{2}
+2\beta_{2}^{4}t-\beta_{2}^{2}x,\\
&A_{4}=2\alpha_{1}^{4}t+8i\alpha_{1}^{3}\beta_{1}t+(-12\beta_{1}^{2}t+x)\alpha_{1}^{2}-8i(\beta_{1}^{2}t
-\frac{1}{4}x)\beta_{1}\alpha_{1}+2\beta_{1}^{4}t-\beta_{1}^{2}x,\\
&\zeta_{1}=\alpha_{1}+i\beta_{1}-\alpha_{2}-i\beta_{2},\quad \zeta_{2}= i\alpha_{1}+i\alpha_{2}-\beta_{1}-\beta_{2}.
\end{split}
\end{equation}

When $\alpha_{2}=\alpha_{1}$, $\beta_{2}=-\beta_{1}$ or $\alpha_{2}=-\alpha_{1}$, $\beta_{2}=\beta_{1}$, then $A_{2}=0$ and $A_{3}+A_{4}=0$ in Eq. \eqref{0q2}. Namely, soliton solutions can be constructed by taking $\lambda_{2}=\pm\lambda_{1}^{*}=\pm(\alpha_{1}-i\beta_{1})$. 
When $\alpha_{1}=\alpha_{2}=0$ or $\beta_{1}=\beta_{2}=0$, then $A_{2}=0$ in Eq. \eqref{0q2}. It means $q[2]$ can represent periodic solutions when $\lambda_{1}$ and $\lambda_{2}$ are pure imaginary numbers or real numbers.  

Following, similar to the above method of studying the solution of the DNLS equation, we can extend the results related to the DNLS equation obtained in this paper to the reverse-space-time DNLS equation.

{\bf  \subsection{Modulational Instability} }
The MI analysis is carried out to show the conditions of the generating MI and modulational stability (MS) regions for the reverse-space-time DNLS equation. It is easy to verify that Eq. \eqref{ndnls} has the plane wave solution $q_{0}(x,t)=ce^{i(kx+wt)}$, where the background frequency $w=-c^{2}k+k^{2}$, $c$ is the amplitude and $k$ is the wave number.
 According to the MI theory, the perturbation solution is of the form $q_{1}(x,t)=(c+P)e^{i(kx+wt)}$, where $P= m\exp(i(K x+\Omega t))+n\exp(-i(K x+\Omega t))$, $K$ is the  disturbance wave number and $\Omega$ is the  disturbance  frequency. Substituting $q_{1}(x,t)$ into Eq. \eqref{ndnls}, we get a system of linear homogeneous equations for the small parameters $m$ and $n$
\begin{equation}
\begin{split}
 ((k+2K)c^{2}+(-K^{2}-2kK+\Omega))m+(k+K)c^{2}n=0,\\
(k-K)c^{2} m+((k-2K)c^{2}+2kK-K^{2}-\Omega)n=0,
\end{split}
\end{equation}
which gives rise to the dispersion relation 
\begin{equation}
\Omega =(-2c^{2}+2k\pm\sqrt{c^{4}-2c^{2}k+K^{2}})K.
\end{equation}
We conclude that the MI depends on the values of the amplitude $c$, the wave number $k$ and the perturbation wave number $K$. From the above dispersion relation, it is seen that when $c^{4}-2c^{2}k+K^{2}\geq0$, the  frequency  $\Omega$ is real at any value of the wave number $K$, otherwise, $\Omega$ becomes complex and hence disturbance will grow exponentially in time. The power gain is obtained as
$$ G(K)=\text{Im}(\Omega)=\text{Im}(|K|\sqrt{c^{4}-2c^{2}k+K^{2}}),$$ which represents the MI gain when $c^{4}-2c^{2}k+K^{2}<0$. The imaginary part makes the perturbation function $P$ increase exponentially and destroy the stability of the system, and this instability is a condition for the existence of a rogue wave. There exists two distinctive MI and MS regions. In the region of $c^{4}-2c^{2}k+K^{2}<0$,  MI exists, otherwise, MS region appears.

Let $c=1$, then the MI arises at $k>0.5$. Fig. \ref{MI} shows the gain at    three different plane-wave numbers $k$, and Fig. \ref{dMI} gives the gain function. Actually, the  gain function  $G(K)$ is an even function of $K$, so the MI figure is symmetric about the line $K=0$. 

\begin{figure}[ht!]
\centering
\subfigure[]{
\label{MI}
\includegraphics[width=4.2cm]{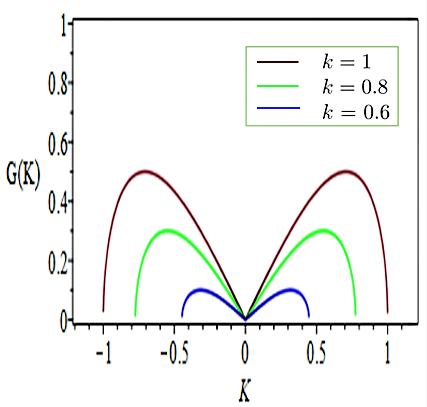}}
\subfigure[]{
\label{dMI}
\includegraphics[width=4cm]{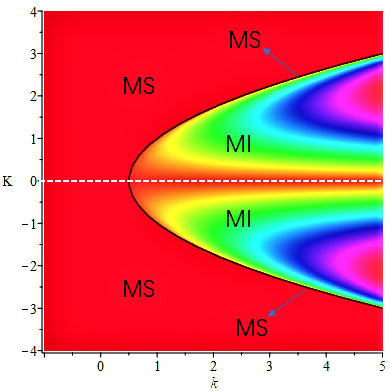}}
\caption{(a) Gain $G(k)$ at three different plane-wave numbers $k$; (b) Gain function.}
\label{fig1}
\end{figure}

\section{Conclusion and discussion}
\par
\ \ \ \
In this paper, the $N$-fold DT formula of the KN system was expressed in a more concise form for any $N$. Then the degenerate and semi-degenerate DT, generalized degenerate and semi-degenerate DT formulae were obtained based on the degenerate $N$-fold DT formula. The multi-soliton solutions, $n$-periodic wave solutions and some mixed solutions were generated from the even-fold DT with different parameter constraints. Using the degenerate and semi-degenerate DT, we obtained the mixed solutions of multi-solitons, $n$-periodic solutions and higher-order solitons. The more general and complex higher-order hybrid-pattern solitons without and with $n$-periodic backgrounds were considered via constructing the generalized DT.  
 
The dynamic evolution of single periodic and double-periodic waves looks very regular. But the $n$-periodic solution ($n>2$) shows a more complex structure which has peaks with different amplitudes and sizes. It is due to the elastic collision of periodic solutions with different directions and velocities. The local structure of soliton on the periodic-background has a single peak with two caves which is similar to the rogue waves. This gives us an idea to construct rogue wave solutions, but the feasibility remains to be proved. It is amazing that our results of the DNLS equation can be extended directly to the reverse-space-time DNLS equation. We also believe that the method can be applied to many other physical systems, such as the Gerdjikov-Ivanov equation.
 
The results obtained in this investigation are of importance for understanding the various soliton phenomena in many fields governed by the local and nonlocal nonlinear dynamical systems, such as nonlinear optics, Bose-Einstein condensates and other relevant fields.

\section*{Acknowledge}
This work was supported by National Natural Science Foundation of China (No.12175069), Global Change Research Program of China (No.2015CB953904), Science and Technology Commission of Shanghai Municipality (No.18dz2271000 and No.21JC1402500).

\section*{Declarations}

\section*{Conflict of interests}
The authors declare that there is no conflict of interests regarding the publication of this paper.

\section*{Data availability statement}
 All date generated or analysed during this study are included in this published article.





\end{document}